\newtheorem{theorem}{Theorem}[section]
\newtheorem{lemma}[theorem]{Lemma}
\newtheorem{corollary}[theorem]{Corollary}
\theoremstyle{definition}
\newtheorem{definition}[theorem]{Definition}
\newcommand{\ignore}[1]{}
\newcommand{\etal}{{\em et al.}}
\newcommand{\LABEL}{\mbox{\it Label}}
\newcommand{\EDGE}{\mbox{\it Edge}}
\newcommand{\IND}{\mbox{\it Ind}}
\newcommand{\F}{{\cal F}}
\newcommand{\ceil}[1]{\left\lceil{#1}\right\rceil}
\DeclarePairedDelimiter{\floor}{\lfloor}{\rfloor}
\title{Adjacency labeling schemes and induced-universal graphs}
\author{Stephen Alstrup   \thanks{Department of Computer Science,
University of Copenhagen, Denmark. E-mail:
{\tt s.alstrup@di.ku.dk}.}
\and Haim Kaplan \thanks{Blavatnik School of Computer Science, Tel Aviv University,
  Israel. Research supported by
The Israeli Centers of Research Excellence (I-CORE) program
(Center No. 4/11). E-mail:
{\tt haimk@post.tau.ac.il}.} \and
Mikkel Thorup \thanks{Department of Computer Science,
University of Copenhagen, Denmark. E-mail:
{\tt mikkel2thorup@gmail.com}.} \and  Uri Zwick\thanks{Blavatnik School of Computer Science, Tel Aviv University,
  Israel. Research supported by BSF grant no.\ 2012338 and by
The Israeli Centers of Research Excellence (I-CORE) program
(Center No. 4/11). E-mail: {\tt
	zwick@tau.ac.il}.}}
\begin{document}

\maketitle

\vspace*{-15pt}

\begin{abstract}
\noindent%
We describe a way of assigning \emph{labels} to the vertices of any undirected graph on up to~$n$ vertices, each composed of $n/2+O(1)$ bits, such that given the labels of two vertices, and no other information regarding the graph, it is possible to decide whether or not the vertices are adjacent in the graph. This is optimal, up to an additive constant, and constitutes the first improvement in almost 50 years of an $n/2+O(\log n)$ bound of Moon. As a consequence, we obtain an \emph{induced-universal} graph for $n$-vertex graphs containing only $O(2^{n/2})$ vertices, which is optimal up to a multiplicative constant, solving an open problem of Vizing from 1968. We obtain similar tight results for directed graphs, tournaments and bipartite graphs.
\end{abstract}

\section{Introduction}

An \emph{adjacency labeling scheme} for a given family of graphs is a way of assigning \emph{labels} to the vertices of each graph from the family such that given the labels of two vertices in the graph, and no other information, it is possible to determine whether or not the vertices are adjacent in the graph. The labels are assumed to be composed of bits and are required to be of the same length. The goal is, of course, to make the labels as short as possible. An adjacency labeling scheme can be used to store a graph \emph{implicitly} in a \emph{distributed} manner. Adjacency labeling schemes first appear in
Breuer \cite{Breuer66}, Breuer and Folkman \cite{BF67}, M\"{u}ller \cite{muller}, and Kannan, Naor and Rudich~\cite{KNR92}. (See more references below.)

Various other types of labeling schemes were also considered. In a \emph{distance} labeling scheme, 
given the labels of two vertices it should be possible to deduce the distance between them in the represented graph. In a \emph{routing} scheme, we may want to be able to identify the first edge on a shortest path, or an almost shortest path, between the two vertices. There is a vast literature on these subjects. 
When the graphs considered are rooted trees, we may want to be able to decide whether a vertex is an \emph{ancestor} of another vertex, given just the labels of the two vertices, or to be able to compute the label of their \emph{Nearest Common Ancestor} (NCA).
(See next section and the extensive survey of Gavoille and Peleg~\cite{gavoillepeleg}.)

Closely related to adjacency labeling schemes are \emph{induced-universal graphs}. A graph ${\cal G}=({\cal V},{\cal E})$ is said to be an induced-universal graph for a family $\cal F$ of graphs, if for every graph~$G$ of~$\cal F$ there is an induced subgraph of~$\cal G$ that is isomorphic to~$G$.
Induced-universal graphs were introduced by Rado~\cite{Rado64}.
Kannan \etal~\cite{KNR92} note that a family $\cal F$ has an $L$-bit adjacency labeling scheme if and only if it has an induced-universal graph on at most $2^L$ vertices. Moon~\cite{moon1965minimal} showed that the family of all $n$-vertex undirected graphs has an induced-universal graph on $O(n2^{n/2})$ vertices. To do that, he implicitly constructs an adjacency labeling scheme for $n$-vertex graphs that assigns each vertex an $(\lfloor n/2\rfloor+\lceil\lg n\rceil)$-bit label.\footnote{Throughout the paper, we let $\lg n=\log_2 n$.} Moon~\cite{moon1965minimal} uses a simple counting argument to show that
adjacency labels for $n$-vertex graphs must contain at least $(n-1)/2$ bits, and that any induced-universal graph for $n$-vertex graphs must contain at least $2^{(n-1)/2}$ vertices, showing that his upper bounds are not far from being optimal. Closing the gap between the upper and lower bounds is mentioned as an open problem in Vizing \cite{vizing1968some}.
Bollob{\'a}s and Thomason \cite{bollobas1981graphs} show that a random graph on $\lceil n^2 2^{n/2}\rceil $ vertices is, with high probability, an induced-universal graph for the family of $n$-vertex undirected graphs.
While succinct adjacency labeling schemes and small induced-universal graphs for various families of graphs were subsequently constructed (see the next section for a summary), no progress was made on the most basic problem of finding adjacency labeling schemes and induced-universal graphs for the family of all $n$-vertex graphs.

We obtain an adjacency labeling scheme for $n$-vertex graphs that assigns each vertex an $(\ceil{n/2}+4)$-bit label, which is optimal up to a small additive constant. As a consequence, we also get an induced-universal graph of size $O(2^{n/2})$ which is optimal up to a small multiplicative factor.

Using our techniques we also obtain an $(n+3)$-bit adjacency labeling scheme for $n$-vertex directed graphs, an $(\ceil{n/2}+4)$-bit adjacency labeling scheme for $n$-vertex \emph{tournaments}, thus improving an $(\floor{n/2}+\lceil \lg n\rceil)$-bit  bound of Moon \cite{moon1968topics}, and finally an $(\frac{n}{4}+O(1))$-bit adjacency labeling scheme for $n$-vertex \emph{bipartite} graphs, improving an $(\frac{n}{4}+2\lceil\lg n\rceil)$-bit scheme of Lozin and Rudolf \cite{Lozin2007}. All these results are again optimal up to a small additive constant and give rise to induced-universal graphs that are optimal up to a small multiplicative factor.

\vspace*{-10pt}
\subsection*{The basic challenge}

\vspace*{-5pt}
To illustrate the most basic technical challenge, we briefly consider the simplest
case of \emph{directed} graphs.
Suppose that there is an adjacency labeling scheme that assigns each vertex of an $n$-vertex graph an $L$-bit label. As given the labels of two vertices we can determine whether the vertices are adjacent, the labels of all the vertices determine the graph. As $n(n-1)$ bits are needed to represent a general $n$-vertex directed graph, we get that $L\ge n-1$, i.e., each label must contain at least $n-1$ bits. (For a formal version and a slight strengthening of this argument, see Section~\ref{sec:lower}.)

Suppose now that each vertex~$u$ in an $n$-vertex graph has a distinct index $\,ind(u)\in \{0,1,\ldots,n-1\}$ assigned to it. The graph can then be represented using the adjacency matrix $A=(a_{ij})$, where $a_{ij}=1$ if and only if there is an edge from the vertex whose index is~$i$ to the vertex whose index is~$j$. We can let the label of~$u$ be the $(n-1)$-bit string $adj(u)$ which is simply the $ind(u)$-th row of the adjacency matrix with the diagonal element omitted. Given the labels $adj(u)$ and $adj(v)$ of two vertices~$u$ and~$v$, \emph{and} their indices $ind(u)$ and $ind(v)$, we can easily decide whether there is an edge from~$u$ to~$v$ in the graph. Such an edge exists if and only if $ind(v)<ind(u)$ and $adj(u)[ind(v)]=1$, or $ind(v)>ind(u)$ and $adj(u)[ind(v)-1]=1$. (As can be seen $adj(v)$ is not even required here.)

This labeling scheme seemingly matches the trivial lower bound.
Unfortunately, it is \emph{not} a valid adjacency labeling scheme. To determine whether~$u$ and~$v$ are adjacent, we need to know not only their \emph{labels}, but also their \emph{indices}. (In the sequel, we thus refer to $adj(u)$ as the \emph{tag}, and not the label of~$u$.)

We can of course obtain a valid adjacency labeling scheme by letting the label of a vertex be an encoding of both its index and its tag. But, the resulting labels would then be of length $n+\lceil\lg n\rceil-1$. The fundamental question is whether these extra $\lceil\lg n\rceil$ bits are needed. We show that they are \emph{not} needed. Using a careful choice of indices, we can encode both indices and tags using only $n+O(1)$ bits.

\begin{table}[t]
\renewcommand{\arraystretch}{1.3}
\small
\centering
\makebox[0pt][c]{
\begin{tabular}{|c|c|c|c|}
\hline

 \hline
\bf Graph family & \bf Lower bound & \bf Upper bound & \bf Reference  \\
\hline
\noalign{\vskip 2mm} \hline
General graphs  &   $2^{\frac{n-1}{2}}$ & $O(n 2^{\frac{n}{2}})$ &  Moon \cite{moon1965minimal} \\
\hline
Tournaments &   $2^{\frac{n-1}{2}}$ & $O(n 2^{\frac{n}{2}})$ & Moon \cite{moon1968topics}  \\
\hline
Bipartite graphs &   $\Omega(2^{\frac{n}{4}})$ & $O(n^2 2^{\frac{n}{4}})$ & Lozin-Rudolf \cite{Lozin2007} 
\\
\hline \noalign{\vskip 2mm}
\hline
Graphs of max degree $d$, $d$ even & $\Omega(n^{\frac{d}{2}})$ & $O(n^{\frac{d}{2}})$& Butler \cite{Butler_induced-universalgraphs} \\
\hline
Graphs of max degree $d$, $d$ odd & $\Omega(n^{\frac{d}{2}})$ &  $O(n^{{\frac{d+1}{2}}-\frac{1}{d}}\log^{2+\frac{2}{d}}n)$ & Esperet \etal~\cite{Esperet2008} \\
\hline
Graphs of max degree 2&  $\floor{\frac{11n}{6}}$ & $\floor{\frac{5n}{2}} + O(1)$& Esperet \etal~\cite{Esperet2008}    \\

\hline
\noalign{\vskip 2mm} \hline
Graphs excluding a fixed minor
 & $\Omega(n)$ & $n^2 (\log {n})^{O(1)} $ & Gavoille-Labourel \cite{gavoille2007shorter}  \\
\hline
Planar graphs & $\Omega(n)$ &$n^2(\log n)^{O(1)}$ & Gavoille-Labourel \cite{gavoille2007shorter} \\
\hline
Planar graphs of bounded degree & $\Omega(n)$ & $O(n^2)$& Chung \cite{Chung90} \\
\hline
Outerplanar graphs & $\Omega(n)$ &  $n (\log n)^{O(1)}$ & Gavoille-Labourel \cite{gavoille2007shorter} \\
\hline
Outerplanar graphs of bounded degree & $\Omega(n)$& $O(n)$& Chung \cite{Chung90}\\
\hline
\noalign{\vskip 2mm} \hline
Graphs of treewidth $k$ & $n2^{\Omega(k)}$& $n (\log \frac{n}{k})^{O(k)} $& Gavoille-Labourel \cite{gavoille2007shorter}\\

\hline
Graphs of arboricity $k$ &  $\frac{n^k}{2^{O(k^2)}}$ & $n^k \min\{(\log n)^{O(1)},2^{O(k\log^*n)}\}$& Alstrup-Rauhe \cite{alstruprauhe} \\
\hline
Forests &  $\Omega(n)$ & $n2^{O(\log^*n)}$& Alstrup-Rauhe \cite{alstruprauhe} \\
\hline
Forests of bounded degree& $\Omega(n)$ & $O(n)$& Chung \cite{Chung90} \\
\hline
Trees of depth $d$ & $\Omega(n)$ & $O(nd^3)$ & Fraigniaud-Korman \cite{fraigniaudkorman2} \\
\hline
Caterpillars & $\Omega(n)$ & $O(n)$& Bonichon \etal~\cite{bonichon} \\

\hline
\end{tabular}
}
\caption{Induced-universal graphs for various families of graphs. All families considered, except tournaments, are families of undirected graphs.  The results for graphs of maximum degree at most~$d$ assume that~$d$ is a constant.
The $\Omega(n^{d/2})$ lower bound for~$d$ odd is due to Butler~\cite{Butler_induced-universalgraphs}.
In the result for families of graphs with an excluded minor, the~$O(1)$ term in the exponent depends on the fixed minor excluded.}
\label{tab:adjacency2}
\end{table}

\vspace*{-10pt}
\subsection*{Organization of paper}

\vspace*{-5pt}
The rest of this paper is organized as follows. In Section~\ref{sec:summary} we provide a concise summary of related results. In Section~\ref{sec:prelim} we give a formal definition of adjacency labeling schemes and discuss some variants of the definition. In Section~\ref{sec:blocks} we describe the two building blocks used to obtain all our results. The first one of these building blocks, which is the cornerstone of all our constructions, is a labeling scheme for very unbalanced bipartite graphs. The labels produced by this labeling scheme vary drastically in size. Our second building block is a \emph{spreading} scheme used to smooth the differences in the label sizes. Combining the two schemes we manage to assign all vertices labels of the same size, thus conforming to the formal requirement. In Section~\ref{sec:directed} we present our new labeling schemes for \emph{directed} graphs. In Section~\ref{sec:undirected} we present our new labeling schemes for \emph{undirected} graphs. The labeling schemes for directed graphs are presented first as they are somewhat simpler. In Section~\ref{sec:tournaments} we present our schemes for \emph{tournaments}.
In Section~\ref{sec:bipartite} we present our results for \emph{bipartite} graphs. The schemes for bipartite graphs require some additional new ideas.
In Section~\ref{sec:efficiency} we discuss the issue of efficient decoding. In Section~\ref{sec:universal} we discuss the construction of induced-universal graphs.  In Section~\ref{sec:lower} we discuss lower bounds. We end in Section~\ref{sec:concl} with some concluding remarks and open problems.

\section{Summary of related results}\label{sec:summary}

A summary of known upper and lower bounds on the size of induced-universal graphs for various families of graphs is given in Table~\ref{tab:adjacency2}.
Corresponding results for adjacency labeling schemes can be obtained by taking logarithms.
We improve the first three upper bounds, making them asymptotically tight.


An induced-universal graph for a family~${\cal F}$ is a graph that contains each graph from~${\cal F}$ as an induced subgraph. A \emph{universal} graph for~${\cal F}$, on the other hand, is a graph that contains each graph from~${\cal F}$ as a subgraph, not necessarily induced. A clique on~$n$ vertices is clearly a universal graph for all $n$-vertex graphs. The challenge is to construct universal graphs with as few edges as possible.
Chung~\cite{Chung90} shows that universal graphs can be used to construct induced-universal graphs. Using  universal graphs constructed by Babai \etal~\cite{BCEGS82}, Bhatt \etal~\cite{BCLR89} and Chung \etal~\cite{CGP76,CG78,CG79,CG83}, she obtains her induced-universal graphs cited in Table~\ref{tab:adjacency2}. The induced-universal graphs for planar graphs, outerplanar graphs, graphs excluding a fixed minor, and bounded degree graphs listed in Table~\ref{tab:adjacency2} also rely on her ideas.
Alon and Capalbo \cite{alon2007sparse,AlonCapalbo2008}, improving many previous results, show that for every fixed~$d$, there is a graph with $O(n^{2-2/d})$ edges which is universal for $n$-vertex graphs of maximum degree at most~$d$, which is asymptotically optimal. Esperet \etal~\cite{Esperet2008} use this result to obtain their induced-universal graphs for graphs of fixed maximum degree~$d$, where~$d$ is odd.

Distance labeling schemes were considered by many authors. See, e.g., Peleg \cite{peleg} and Gavoille \etal~\cite{Gavoille200485} and the references therein. Labeling schemes for flow and connectivity were considered by Katz \etal~\cite{siamcompKatzKKP04} and Korman \cite{Korman2010}.

Labeling schemes for answering ancestor and NCA queries in trees were considered, among others by, Abiteboul \etal~\cite{abiteboul}, Alstrup \etal~\cite{AR02,AGKR04,alstrupnca2014} and Fraigniaud and Korman \cite{fraigniaudkorman}.

Routing schemes were also considered by many authors. See, e.g.,
Eilam \etal~\cite{EilamGP03}, Fraigniaud and Gavoille \cite{Gavoille01}, Thorup and Zwick \cite{ThZw05,throupzwick} and the references therein.

\section{Prelimaries}\label{sec:prelim}


%
%

We begin with a formal definition of adjacency labeling schemes. For concreteness, we assume throughout the paper that every $n$-vertex graph is defined on the vertex set $V=[n]=\{0,1,\ldots,n-1\}$. Every $n$-vertex graph can of course be made a graph on $V=[n]$ by mapping its vertices to~$[n]$.

\begin{definition}[Adjacency labeling schemes]\label{def:label}
Let $\F_n$ be a family of graphs on vertex set $V=[n]=\{0,1,\ldots,n-1\}$.
A pair of functions $\LABEL:\F_n\to \left([n]\to\{0,1\}^{L}\right)$ and $\EDGE:\{0,1\}^L\times \{0,1\}^L\to\{0,1\}$ is an $L$-bit adjacency labeling scheme for $\F_n$ if and only if for every $G=(V,E)\in \F_n$, where $V=[n]$, and every $u,v\in V$, we have $(u,v)\in E$ if and only if $\EDGE(\LABEL(G)(u),\LABEL(G)(v))=1$.
\end{definition}

In Definition~\ref{def:label}, the family $\F_n$ can be a family of \emph{undirected} graphs or of \emph{directed} graphs. If~ $\F_n$ is a family of undirected graphs,
we should of course have $\EDGE(x,y)=\EDGE(y,x)$, for every $x,y\in\{0,1\}^L$.

Many of the papers on adjacency labeling schemes say that a family $\F_n$ admits an $L$-bit adjacency labeling scheme if and only if given any graph $G\in \F_n$, it is possible to assign each vertex~$u$ of~$G$ an $L$-bit label such that given the labels of two vertices~$u$ and~$v$ it is possible to decide whether they are adjacent in~$G$. It is not difficult to check that this definition is equivalent to our definition. We explicitly refer to the encoding function $\LABEL$, that assigns labels to the vertices of a given graph, and $\EDGE$, the decoding function, that given two labels decides whether the vertices they belong to are adjacent.

An adjacency labeling scheme $(\LABEL,\EDGE)$ for a family $\F_n$ is said to satisfy the \emph{distinctness} property if and only if for every graph $G=(V,E)$ from $\F_n$, and every two distinct vertices $u,v\in V$ we have $\LABEL(G)(u)\ne\LABEL(G)(v)$. Not every labeling scheme satisfies this property. (Of course, if $\LABEL(G)(u)=\LABEL(G)(v)$, then~$u$ and~$v$ must have the same set of neighbors in~$G$.)

Some of the published lower bounds for adjacency labeling schemes rely on the distinctness property. Similar lower bounds can be obtained, however, without relying on it. (See Section~\ref{sec:lower}.) The distinctness property is required if we want to convert a labeling scheme into an induced-universal graph. 

All our labeling schemes satisfy the distinctness property. Furthermore, for all our labeling schemes it is possible to define an \emph{index} function $\IND:\{0,1\}^L\to [n]$ such that for every graph $G\in \F_n$ and every $u\ne v\in [n]$ we have $\IND(\LABEL(G)(u))\ne \IND(\LABEL(G)(v))$.  However, we would \emph{not} in general have $\IND(\LABEL(G)(u))=u$. Our labeling schemes make an essential use of the freedom to reassign names, i.e.,  indices from $[n]$, to the vertices of the graph. Adjacency labeling schemes that posses such an index function are said to be \emph{indexing}.



If $\F$ is a family of graphs, we let $\F_n$ be the $n$-vertex graphs of~$\F$, and $\F_{\le n}$ the graphs of~$\F$ with at most~$n$ vertices. If every $n'$-vertex graph~$G'$ of~$\F$, where $n'<n$, can be extended into an $n$-vertex graph~$G$ of~$\F$, e.g., by adding $n-n'$ isolated vertices, then a labeling scheme for~$\F_n$, can also be used as a labeling scheme for $\F_{\le n}$. A family $\F$ that satisfies this property is said to satisfy the \emph{extension} property.

When a labeling scheme is used, it is essentially assumed that~$L$, the length of the labels, is known. (Various coding issues arise if~$L$ is not known, or if labels are not of the same length.)
 We may assume that~$n$, the number of vertices in the graph, or an upper bound on this number, is also known. This can be justified as follows. Assume that~$\F$ satisfies the extension property defined above. Let $L_\F(n)$ be the length of the labels assigned by the labeling scheme to the vertices of $n$-vertex graphs of~$\F$. We may assume, without loss of generality, that~$L_\F(n)$ is non-decreasing in~$n$. Given a label size~$L$, we can find the largest~$n$ for which $L_\F(n)=L$ and then infer that the encoded graph has at most~$n$ vertices. The same process should of course be followed when assigning the labels to the vertices.


\section{Building blocks}\label{sec:blocks}

In this section we present our two main new ideas. The new ideas give rise to the two main building blocks used in all our constructions. Both building blocks are labeling schemes for bipartite graphs. They assign each vertex~$u$ both an \emph{index} $\,ind(u)$ and an adjacency \emph{tag} $\,adj(u)$. The pair $(ind(u),adj(u))$ may be viewed as the adjacency label of~$u$. The first scheme needs the freedom to assign indices to the vertices. The second scheme can use indices already assigned to the vertices.

The adjacency tags assigned to the vertices are usually not of the same length. Thus, the resulting labeling schemes do not conform to Definition~\ref{def:label}. They can still be used, however, to construct labeling schemes 
that do conform to Definition~\ref{def:label}. In a typical application, the graph~$G=(V,E)$ to be encoded is partitioned into~$k$  subgraphs $G_i=(V_i,E_i)$, for $i\in[k]$,
where $E=\cup_{i=1}^k E_i$. Each vertex $u\in U$ is assigned a single index $ind(u)$, used in the encoding of all subgraphs, and a separate adjacency tag $adj_i(u)$ for each subgraph.
(If $u\not\in V_i$, then $adj_i(u)$ is empty.)
The label of~$u$ is then taken to be the tuple $(ind(u),adj_1(u),\ldots,adj_k(u))$. Given $ind(u)$, it would be possible to deduce the length of the tags $adj_1(u),\ldots,adj_k(u)$. While individual tags may have different lengths, the resulting labels would all have the same length.

A bipartite graph $G=(U,V,E)$, where $|U|=k$, $|V|=n-k$, $U\cap V=\emptyset$,
and of course $E\subseteq U\times V$, is said to be $(k,n-k)$-bipartite graph.
We usually assume, without loss of generality, that $U=[k]=\{0,1,\ldots,k-1\}$ and $V=[k,n)=\{k,k+1,\ldots,n-1\}$.
Such a bipartite graph can clearly be represented as a $k\times (n-k)$ Boolean adjacency matrix $A=A_G$.

\subsection{A labeling scheme for extremely unbalanced bipartite graphs}\label{sub:unbalanced}

Our main new idea is a labeling scheme for $(k,n-k)$-bipartite graphs $G=(U,V,E)$ where $k\ll n$.
The labeling scheme assigns indices to the vertices of~$V$, thus permuting the columns of the adjacency matrix $A=A_G$, in a way that enables a succinct encoding of the rows of~$A$.

Every $n$-bit string is of the form $0^{t_1}1^{t_2}\ldots$ or $1^{t_1}0^{t_2}\ldots$, where $t_1,t_2,\ldots\ge 1$. Each such maximal block of consecutive 0s or 1s is called a \emph{run}.
If $A=(a_{i,j})$ is a $k\times n$ Boolean matrix and $\pi\in S_n$ is a permutation on $[n]$, we let $A^\pi=(a^\pi_{i,j})$ be the $k\times n$ matrix defined by $a^\pi_{i,j}=a_{i,\pi(j)}$. For convenience, we start the numbering of the rows and columns of~$A$ from~$0$.

\begin{lemma}\label{lem:pi} Let $A$ be an $k\times n$ Boolean matrix. Then, there exists a permutation $\pi\in S_n$ such that the $i$-th row of $A^\pi$ is composed of at most $2^{i}{+}1$ runs. Furthermore, if the $i$-th row is composed of $2^{i}{+}1$ runs, then the first run is a run of 0s. (Recall that row indices start from~$0$.)
\end{lemma}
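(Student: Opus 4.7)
My plan is to let $\pi$ be the permutation that arranges the columns of $A$ in the order dictated by the reflected binary Gray code on their column patterns. Concretely, associate to each column $j\in[n]$ its pattern $p_j=(a_{0,j},a_{1,j},\ldots,a_{k-1,j})\in\{0,1\}^k$, and choose $\pi$ so that the sequence of patterns $p_{\pi(0)},p_{\pi(1)},\ldots,p_{\pi(n-1)}$ agrees with the order of the reflected binary Gray code $G_k$, grouping repeated patterns contiguously in any internal order. Here $G_k$ is defined recursively by $G_1=(\s{0},\s{1})$ and $G_{k+1}=(\s{0}\cdot G_k,\;\s{1}\cdot G_k^R)$, where $G_k^R$ denotes the reverse of $G_k$ and bit-prefixing is applied entry-wise.

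First I would establish, by induction on $k$, the key structural fact that the projection of $G_k$ onto its $i$-th coordinate is a binary word of length $2^k$ with exactly $2^i+1$ maximal runs, beginning with a \s{0}-run. For $G_{k+1}$, the $0$-th coordinate is plainly $\s{0}^{2^k}\s{1}^{2^k}$. For $i\ge 1$, the $i$-th coordinate of $G_{k+1}$ is the $(i-1)$-th coordinate of $G_k$ concatenated with its own reverse; both halves open with \s{0} and contribute $2^{i-1}+1$ runs, and at the junction the last symbol of the first half and the first symbol of the second agree (each equals the last symbol of $G_k$'s $(i-1)$-th coordinate), so the two central runs merge, yielding $2(2^{i-1}+1)-1=2^i+1$ runs starting with \s{0}.

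Next I would pass from the full Gray code to an arbitrary matrix $A$. Reading off row $i$ of $A^\pi$ gives a word obtained from the $i$-th coordinate of $G_k$ by deleting positions whose patterns do not occur as columns of $A$ and replacing each remaining position by a block whose length equals the multiplicity of its pattern. Partition the $i$-th coordinate of $G_k$ into its maximal runs $R_1,R_2,\ldots,R_{2^i+1}$, alternating in bit and starting with a \s{0}-run. Repetitions inside a single $R_j$ collapse to one run of row $i$ of $A^\pi$, so the relevant question is which $R_j$ contribute at least one column of $A$. If every $R_j$ does, row $i$ of $A^\pi$ has exactly $2^i+1$ runs; if some interior $R_j$ is empty, then its neighbours $R_{j-1}$ and $R_{j+1}$ share the same bit and merge, trimming the count by two, while an empty terminal run trims the count by one. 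The subtlest point is to formalize cleanly that any omission strictly decreases the run count, but the case analysis is short. Once this is in hand the ``first run'' clause follows for free: row $i$ of $A^\pi$ can attain $2^i+1$ runs only when no $R_j$ is omitted, in which case $R_1$ contains a column of $A$, that column appears first under $\pi$ (since $R_1$ occupies the very beginning of $G_k$), and row $i$ of $A^\pi$ therefore opens with \s{0}.
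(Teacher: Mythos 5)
Your proof is correct and follows essentially the same route as the paper: order the column patterns by the reflected binary Gray code and use the fact that the $i$-th coordinate of the $k$-bit Gray code changes exactly $2^i$ times, hence forms $2^i+1$ runs beginning with a \s{0}-run. Your treatment of empty/repeated pattern classes and of the ``first run is a run of 0s'' clause is somewhat more explicit than the paper's, which simply remarks that missing index sets can only reduce the run count, but the underlying argument is identical.
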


\begin{proof}
As a warm-up, we begin by proving a slightly weaker statement. We prove that there is a permutation $\pi\in S_n$ for which the $i$-th row of $A^\pi$, for $0\le i<k$, is composed of at most $2^{i+1}$ runs.
We view the columns as binary representations of numbers where the bit in row~$i$ is the $i$-th most  significant bit. For every $j\in \{0,1,\ldots,2^k-1\}$, let $I_j$ be the set of indices of the columns of~$A$ that contain the $k$-bit binary representation of~$j$. Any permutation $\pi$ that sorts the columns in non-decreasing lexicographic order, i.e., places the indices in $I_0$ first, then those of $I_1$, and so on, ending with the indices in $I_{2^k-1}$, satisfies the required condition.

To tighten the bound and obtain the claim of the lemma, we order the blocks $I_0,I_1,\ldots,I_{2^k-1}$ using a \emph{gray code}. The $k$-bit gray code is an ordering of the $k$-bit words such that two consecutive words differ in a single position. For first gray codes are: $\langle 0 , 1\rangle$ and $\langle 00,01,11,10\rangle$. Furthermore, if $\langle g_0,\ldots,g_{2^{b}-1}\rangle$ is the $b$-bit gray code, then $\langle 0g_0,\ldots,0g_{2^{b}-1},1 g_{2^{b}-1},\ldots,1g_{0}\rangle$  is the $(b+1)$-bit gray code. It is easy to verify by induction that the number of times the $i$-th significant bit in a gray code changes is exactly~$2^{i}$. Thus, any permutation $\pi$ that orders the blocks $I_j$ according to a gray code has the property that the $i$-th row in $A^\pi$ is composed of at most $2^{i}+1$ runs. The number of runs may be smaller as some of the index sets $I_j$ may be empty. If the number of runs is exactly $2^{i}+1$, then the first run is a run of 0s.
\end{proof}

\begin{lemma}\label{lem:L} The total number of $n$-bit strings composed of at most $2^i+1$ runs is $R(n,i)=2\sum_{j=0}^{2^i} {n-1 \choose j}$.
Thus, any $n$-bit string composed of at most $2^i+1$ runs can be specified using $L(n,i)=\lceil \lg R(n,i)\rceil$ bits.
\end{lemma}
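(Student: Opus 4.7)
The plan is to count $n$-bit strings with exactly $k$ runs and then sum over $k = 1, 2, \ldots, 2^i+1$.

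First I would set up the bijection. An $n$-bit string $b_1 b_2 \cdots b_n$ is completely determined by its first bit $b_1 \in \{0,1\}$ together with the set of \emph{transition positions}, that is, the set $T \subseteq \{1,2,\ldots,n-1\}$ of indices $j$ such that $b_j \ne b_{j+1}$. The number of runs of the string equals $|T|+1$, since each transition position marks the boundary between two consecutive runs. Thus the map $b \mapsto (b_1, T)$ is a bijection between $n$-bit strings with exactly $k$ runs and pairs consisting of a bit in $\{0,1\}$ and a $(k-1)$-subset of $\{1,\ldots,n-1\}$. Consequently, the number of $n$-bit strings with exactly $k$ runs is $2\binom{n-1}{k-1}$.

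Next I would sum: the number of $n$-bit strings with at most $2^i+1$ runs is
\[
\sum_{k=1}^{2^i+1} 2\binom{n-1}{k-1} \;=\; 2\sum_{j=0}^{2^i}\binom{n-1}{j} \;=\; R(n,i),
\]
where I substituted $j = k-1$. This establishes the first claim.

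For the second claim, since there are only $R(n,i)$ such strings, one can fix any injection from the set of these strings into $\{0,1\}^{L(n,i)}$ where $L(n,i) = \lceil \lg R(n,i) \rceil$, for instance by enumerating the strings (say, lexicographically) and representing each by the binary expansion of its rank. Then any string in this set can be specified by its $L(n,i)$-bit code. No obstacle is expected; the proof is essentially a standard stars-and-bars style count followed by the trivial observation that $\lceil \lg N \rceil$ bits suffice to index a set of size $N$.
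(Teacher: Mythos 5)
Your proof is correct and follows essentially the same approach as the paper: the paper counts the $r-1$ endpoints of the first $r-1$ runs as an $(r-1)$-subset of $\{1,\ldots,n-1\}$ together with a factor of $2$ for the identity of the first run, which is exactly your (first bit, transition set) bijection. The only difference is cosmetic — you spell out the bijection and the trivial $\lceil\lg N\rceil$-bit indexing step more explicitly than the paper does.
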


\begin{proof}
To represent an $n$-bit word composed of $r$ non-empty runs, we need to represent the $r{-}1$ endpoints of the first $r{-}1$ runs. (The first run always starts at position 1, and the $r$-th run always end at position~$n$.) There are thus ${n-1 \choose r-1}$ possibilities. (We have $n-1$ here, as $n$ is the endpoint of the last run, and is therefore not allowed to be the endpoint of any other run.) We need to multiply this number by~2, as the first run may be a run of 0s or a run of 1s.
Summing up we get the desired result.
\end{proof}

Lemma~\ref{lem:pi} states that if the $i$-th row of $A^\pi$ is composed of $2^i+1$ runs, then the first run is a run of 0s. Thus, in the sequel we can actually replace $R(n,i)$ and $L(n,i)$ by $R'(n,i)={n-1\choose 2^i}+2\sum_{j=0}^{2^i-1} {n-1 \choose j}$ and $L'(n,i)=\lceil \lg R'(n,i)\rceil$. This, however, would have only a negligible effect.

%

Let $H(\alpha)=-\alpha\lg \alpha - (1-\alpha)\lg(1-\alpha)$ be the binary entropy function. It is well known that $\sum_{j=0}^k {n \choose j} \le 2^{H(k/n)n}$, for $k\le n/2$. This gives us the following
useful upper bound on~$L(n,i)$.

\begin{lemma}\label{lem:L2} If $2^i\le n/2$, then $L(n,i)\le \lceil H(2^i/n)n \rceil+1$.
\end{lemma}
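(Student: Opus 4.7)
My plan is to reduce everything to the standard entropy bound $\sum_{j=0}^{k}\binom{n}{j}\le 2^{H(k/n)n}$, valid whenever $k\le n/2$, which is already quoted just before the lemma. The target inequality $L(n,i)\le \lceil H(2^i/n)n\rceil + 1$ follows almost mechanically from the definition $L(n,i)=\lceil \lg R(n,i)\rceil$ with $R(n,i)=2\sum_{j=0}^{2^i}\binom{n-1}{j}$.

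First I would observe that $\binom{n-1}{j}\le \binom{n}{j}$ for every $j$ (either from Pascal's identity $\binom{n}{j}=\binom{n-1}{j}+\binom{n-1}{j-1}$, or by noting that the $j$-subsets of $[n-1]$ form a subfamily of the $j$-subsets of $[n]$). Hence
\[
R(n,i)\;=\;2\sum_{j=0}^{2^i}\binom{n-1}{j}\;\le\;2\sum_{j=0}^{2^i}\binom{n}{j}.
\]
Now the hypothesis $2^i\le n/2$ is exactly the condition needed to apply the cited entropy upper bound with $k=2^i$, yielding $\sum_{j=0}^{2^i}\binom{n}{j}\le 2^{H(2^i/n)\,n}$. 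Combining these two steps gives $R(n,i)\le 2^{H(2^i/n)\,n+1}$, so $\lg R(n,i)\le H(2^i/n)\,n+1$.

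Finally I would take ceilings: $L(n,i)=\lceil \lg R(n,i)\rceil \le \lceil H(2^i/n)\,n+1\rceil = \lceil H(2^i/n)\,n\rceil + 1$, since adding an integer commutes with the ceiling. This is exactly the claimed bound.

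I do not anticipate any real obstacle. The only subtlety worth flagging is the $n$ vs.\ $n-1$ mismatch between $R(n,i)$ (which uses $\binom{n-1}{j}$) and the entropy estimate (which is most cleanly stated for $\binom{n}{j}$); the monotonicity $\binom{n-1}{j}\le \binom{n}{j}$ handles this at essentially no loss. Alternatively one could apply the entropy bound directly to $n-1$ and absorb the resulting $H(2^i/(n-1))(n-1)$ into $H(2^i/n)\,n$, but this is strictly more work and unnecessary given the slack already built into the ``$+1$'' in the statement.
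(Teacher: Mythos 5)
Your proof is correct and matches the paper's (implicit) argument: the lemma is stated as an immediate consequence of the quoted entropy bound $\sum_{j=0}^{k}\binom{n}{j}\le 2^{H(k/n)n}$ applied with $k=2^i$, exactly as you do. Your explicit handling of the $\binom{n-1}{j}\le\binom{n}{j}$ step and of the ceiling is the right way to fill in the details the paper leaves unstated.
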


Using Lemmas~\ref{lem:pi} and~\ref{lem:L} we obtain the following labeling scheme:

\begin{lemma} \label{lem:bipartite}\mbox{\rm [Run encoding]} For every $k\le \lg n$ there is a labeling scheme with the following properties. The scheme receives an $(k,n-k)$-bipartite graph $G=(U,V,E)$, where $|U|=k$ and $|V|=n-k$, with a distinct index $\,ind_1(u)\in[k]$ assigned to every $u\in U$. The scheme assigns a distinct index $\,ind_2(v)\in [n-k]$ to every $v\in V$. It also assigns each vertex~$u\in U$ an $\ell_i$-bit tag $adj_1(u)$, where $i=ind_1(u)$ and $\ell_i = L(n-k,i)\le L(n,i)\le \left\lceil H\left(2^{i}/n\right)n \right\rceil+1$. For every $u\in U$ and $v\in V$, given $(ind_1(u), adj_1(u))$ and~$ind_2(v)$ it is possible to determine whether $(u,v)\in E$.
\end{lemma}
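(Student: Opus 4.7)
The plan is to assemble the three preceding lemmas into the claimed scheme. I would begin by forming the $k\times(n-k)$ Boolean adjacency matrix $A$ of $G$, in which row $i$ corresponds to the unique $u\in U$ with $ind_1(u)=i$, while columns are initially indexed by the vertices of $V$ in some arbitrary order. Applying Lemma~\ref{lem:pi} to $A$ yields a permutation $\pi\in S_{n-k}$ such that, for each $0\le i<k$, the $i$-th row of $A^\pi$ consists of at most $2^i+1$ runs, with a leading run of $0$s whenever this bound is attained.

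Next, I would use $\pi$ to assign the indices on $V$: for each $v\in V$ sitting in the $c(v)$-th original column, set $ind_2(v)=\pi^{-1}(c(v))$. Since $\pi$ is a bijection, this produces distinct indices in $[n-k]$. For each $u\in U$ with $ind_1(u)=i$, define the tag $adj_1(u)$ to be the encoding of the $i$-th row of $A^\pi$ under the run-based code of Lemma~\ref{lem:L}; this tag has length $\ell_i=L(n-k,i)$. Because $R(\cdot,i)$ is non-decreasing in its first argument, $L(n-k,i)\le L(n,i)$, and the hypothesis $k\le \lg n$ implies $2^{i}\le 2^{k-1}\le n/2$ for all relevant $i$, so Lemma~\ref{lem:L2} gives $\ell_i\le \lceil H(2^i/n)\,n\rceil+1$, matching the claimed bound.

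For decoding: given $(ind_1(u),adj_1(u))$ and $ind_2(v)$, the decoder reads $i=ind_1(u)$, which pins down the tag length $\ell_i$ and the associated Lemma~\ref{lem:L} code; inverting that code recovers the $i$-th row of $A^\pi$. The bit at position $ind_2(v)$ of this row equals $A^\pi_{i,\,ind_2(v)}=A_{i,\,\pi(ind_2(v))}=A_{i,\,c(v)}$, which is $1$ exactly when $(u,v)\in E$. Crucially, $\pi$ itself is never transmitted: encoder and decoder agree on a fixed procedure (from Lemma~\ref{lem:pi}) for choosing it, and its effect on the vertices of $V$ is captured entirely by the reassigned $ind_2$ values.

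The main conceptual obstacle is making the indexing trick sound rather than doing any new combinatorial work: one must verify that the decoder truly needs only the pair $(ind_1(u),adj_1(u))$ together with $ind_2(v)$, and in particular that the tag length is determined by $ind_1(u)$ alone (so that variable-length tags do not create parsing ambiguity). Once that is in place, the bounds on $\ell_i$ follow mechanically from Lemmas~\ref{lem:L} and~\ref{lem:L2}, and the correctness follows from the definition of $A^\pi$ and the choice of $ind_2$.
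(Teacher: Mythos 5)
Your proposal is correct and follows essentially the same route as the paper's proof: build the adjacency matrix with rows ordered by $ind_1$, permute the columns via Lemma~\ref{lem:pi}, let the permuted column positions define $ind_2$, encode each row by its runs via Lemmas~\ref{lem:L} and~\ref{lem:L2}, and decode by reconstructing row $ind_1(u)$ and reading position $ind_2(v)$. Your explicit remarks that $\pi$ need not be transmitted and that $ind_1(u)$ alone determines the tag length are exactly the points the paper relies on (and uses again when these tags are embedded in longer labels).
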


%

\begin{proof} Let $G=(U,V,E)$ be a bipartite graph.
For every $i\in[k]$, let $u_i\in U$ be such that $ind_1(u_i)=i$.
Let $A\in \{0,1\}^{k\times (n-k)}$ be the adjacency matrix of~$G$ in which the $i$-th row corresponds to~$u_i$. The ordering of the columns of~$A$ is arbitrary. Let $\pi\in S_{n-k}$ be a permutation, whose existence follows from Lemma~\ref{lem:pi}, for which the $i$-th row of~$A^\pi$ is composed of at most $2^i+1$ runs.
For every $j\in [n-k]$, let $v_j\in V$ be the vertex whose column is the $j$-th column of~$A^\pi$ and let $ind_2(v_j)=j$.


The tag $adj_1(u_i)$ is simply an encoding of the
$i$-th row of $A^\pi$, composed of at most $2^{i}{+}1$ runs. By Lemmas~\ref{lem:L} and~\ref{lem:L2}, we can encode this row using $\ell_i=L(n-k,i)\le L(n,i)\le \left\lceil H\left(2^{i}/n\right)n \right\rceil+1$ bits, as required.
(Note that as $i\le k-1$ and $k\le\lg n$, we have $2^{i}\le n/2$, so Lemma~\ref{lem:L2} can indeed be applied.)

If is not difficult to check that, for every $u\in U$ and $v\in V$, given just $ind_1(u),adj_1(u)$ and $ind_2(v)$, it can be determined whether $(u,v)\in E$. Indeed, $ind_1(u)$ tells us which row of the adjacency matrix corresponds to~$u$. Using $ind_1(u)$ and $adj_1(u)$ we can reconstruct this row. The bit in position $ind_2(v)$ then tells us whether $(u,v)\in E$.
\end{proof}


In the present setting, $ind_1(u)$ can be inferred from the length of $adj_1(u)$. However, when the scheme of Lemma~\ref{lem:bipartite} is used as a building block in the construction other labeling schemes, $adj_1(u)$ forms a part of a larger label and $ind_1(u)$ is then used to infer the length of~$adj_1(u)$.

In Section~\ref{sec:efficiency} we consider a modification of the scheme of Lemma~\ref{lem:bipartite} that allows decoding, i.e., determining whether two vertices are adjacent, in constant time, in an appropriate model of computation.

As can be expected, the sum $\sum_{i=0}^{k-1} L(n,i)$ plays an important role in the sequel. As $L(n,i)\le \lceil H(2^i/n)n \rceil + 1$, we get that $\sum_{i=0}^{k-1} L(n,i) \le 2k + \bigl(\sum_{i=0}^{k-1} H(2^i/n)\bigr)n \le 2k + \bar{H}(2^{k-1}/n)n$, where
\[\bar{H}(\alpha) = \sum_{j=0}^{\infty} H\!\left(\frac{\alpha}{2^j}\right)\;.\]
It is not difficult to verify that $\bar{H}(\alpha)$ is well defined, i.e., that the sum converges for any value of~$\alpha$. It is also not difficult to check numerically that
$\bar{H}(\frac{1}{2})=3.15635\ldots$, $\bar{H}(\frac{1}{4})=2.15635\ldots$ and $\bar{H}(\frac{1}{8})=1.34507\ldots$. (Note that as $H(\frac12)=1$, we have $\bar{H}(\frac12)=1+\bar{H}(\frac14)$.)

\subsection{A spreading labeling scheme for bipartite graphs}\label{sub:spread}

We now present a second labeling scheme for $(k,n-k)$-bipartite graphs
used to counterbalance the labeling scheme of Lemma~\ref{lem:bipartite}.
The labeling scheme receives a bipartite graph $G=(U,V,E)$ with distinct indices $ind_1(u)$, for $u\in U$, and $ind_2(v)$, for $v\in V$, already assigned to its vertices.
The scheme assigns adjacency tags $adj_1(u)$ and $adj_2(v)$ to the vertices $u\in U$ and $v\in V$. The scheme also receives
numbers~$0\le \ell_i\le n-k$, for $i\in [k]$, that control the lengths of the tags assigned to the vertices of~$U$.
The tags of the vertices of~$V$ are all of the same length~$L$, which, of course, depends on the $\ell_i$'s.
The bits contained in the tags $adj_1(u)$ and $adj_2(v)$ are ``raw'' adjacency bits,  no coding tricks are used this time. The scheme only uses the freedom to decide whether the adjacency bit corresponding to a pair $(u,v)\in U\times V$ will reside in $adj_1(u)$ or in $adj_2(v)$. The indices $ind_1(u)$ and $ind_2(v)$ will allow us to determine which of the two tags contains the bit and in which position.
No assumption regarding the relation between~$k$ and~$n$ is required.

\begin{lemma} \label{lem:spread}\mbox{\rm [Spreading]} For every $0\le \ell_i\le n-k$, where $i\in [k]$, there is a labeling scheme with the following properties. The scheme receives an $(k,n-k)$-bipartite graph $G=(U,V,E)$, where $|U|=k$, $|V|=n-k$, with a distinct index $\,ind_1(u)\in[k]$ assigned to every vertex $u\in U$ and a distinct index $\,ind_2(v)\in [n-k]$ assigned to every vertex $v\in V$. The scheme assigns each vertex $u\in U$ an $((n-k)-\ell_i)$-bit tag $adj_1(u)$, where $i=ind_1(u)$. It assigns each vertex $v\in V$ an $L$-bit tag $adj_2(v)$, where $L=\lceil ({\sum_{i=0}^{k-1} \ell_i})/{(n-k)} \rceil$.
For every $u\in U$ and $v\in V$, given $(ind_1(u), adj_1(u))$ and $(ind_2(v),adj_2(v))$, and given the $\ell_i$'s, it is possible to determine whether $(u,v)\in E$.
\end{lemma}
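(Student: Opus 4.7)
The plan is to use a round-robin slot assignment to decide which adjacency bits of each $u$-vertex migrate to the $V$-side. I index the vertices of $V$ as $v_0,v_1,\ldots,v_{n-k-1}$ by their $\,ind_2$-values, and concatenate $k$ intervals of lengths $\ell_0,\ell_1,\ldots,\ell_{k-1}$ into one long sequence of $\sum_{i=0}^{k-1}\ell_i$ slots. Slot number $p$ is assigned to the vertex $v_{p \bmod (n-k)}$. For the vertex $u_i\in U$ with $\,ind_1(u_i)=i$, the set $S_i\subseteq V$ of vertices whose adjacency bits with $u_i$ are to be stored on the $V$-side is precisely the set of $v_j$ receiving a slot from the $i$-th interval $[\sigma_i,\sigma_{i+1})$, where $\sigma_i=\sum_{j<i}\ell_j$.

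The two facts to check are: (i) $|S_i|=\ell_i$, and (ii) every $v_j$ belongs to at most $L$ of the sets $S_i$. For (i), the $\ell_i\le n-k$ positions in interval $i$ have pairwise distinct residues modulo $n-k$, so they map to $\ell_i$ distinct vertices. For (ii), the total slot count is $\sum_i\ell_i\le (n-k)L$, so by pigeonhole no residue class is hit more than $L$ times. With these in hand, I define $adj_1(u_i)$ to be the concatenation of the $(n-k)-\ell_i$ adjacency bits $(u_i,v)$ for $v\in V\setminus S_i$, listed in increasing order of $\,ind_2(v)$; and I define $adj_2(v_j)$ to be an $L$-bit string whose $q$-th position holds the adjacency bit for the pair $(u_i,v_j)$ when the slot $p=q(n-k)+j$ falls in some interval $i$, and is a dummy bit otherwise. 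Property (ii) is exactly what ensures there is no position conflict inside any $adj_2(v_j)$.

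Decoding is then routine: given $(ind_1(u),adj_1(u))$ and $(ind_2(v),adj_2(v))$ along with the $\ell_i$'s, set $i=ind_1(u)$, $j=ind_2(v)$, compute $\sigma_i$ and $\sigma_{i+1}$, and check whether some $p\in[\sigma_i,\sigma_{i+1})$ satisfies $p\equiv j\pmod{n-k}$. At most one such $p$ exists; if it does, read the desired bit from $adj_2(v)$ at position $\lfloor p/(n-k)\rfloor$, and otherwise read it from $adj_1(u)$ at the position given by the rank of $j$ among $[n-k]\setminus\{ind_2(v'):v'\in S_i\}$. The main obstacle, such as it is, lies in setting up the slot picture cleanly enough that both (i) and (ii) are transparent; once that is done, everything else is bookkeeping and no assumption relating $k$ to $n$ intervenes.
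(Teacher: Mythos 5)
Your construction is the same round-robin spreading used in the paper: your global slot numbering with $\sigma_i=\sum_{j<i}\ell_j$ and the assignment of slot $p$ to $v_{p\bmod(n-k)}$ is exactly the paper's scheme with starting offsets $s_i=\sigma_i\bmod(n-k)$, and your decoding rule matches theirs. The only quibble is that your fact (ii) does not follow from pigeonhole alone (pigeonhole bounds only some class, not every class); it follows because residue class $j$ receives precisely the slots congruent to $j$ in $[0,\sum_i\ell_i)$, of which there are at most $\lceil\sum_i\ell_i/(n-k)\rceil$ — which your explicit slot map already makes evident.
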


\addtocounter{theorem}{-1}


\begin{proof} 
For every $i\in [k]$, let $u_i\in U$ be the vertex for which $ind_1(u_i)=i$. 
For every $j\in[n-k]$, let $v_j\in V$ be the vertex for which $ind_2(v_j)=j$. Let~$A=(a_{i,j})$ be the adjacency matrix of~$G$ in which the $i$-th row corresponds to~$u_i$ and the $j$-th column corresponds to~$v_j$.
We start with each vertex $u_i$, for $i\in [k]$, holding a $(n-k)$-bit tag $adj_1(u_i)$ that specifies its adjacencies to all vertices of~$V$, i.e., the $i$-th row of the adjacency matrix $A$. Each vertex of~$v_j\in V$ starts with an empty tag $adj_2(v_j)$. Our goal is to move~$\ell_i$  bits from $adj_1(u_i)$, for $i\in[k]$, to the tags $adj_2(v_j)$ of some vertices of~$V$ in such a way that each tag $adj_2(v_j)$ will contain roughly the same number of bits. This can be easily done in the following manner. Let $s_0=0$ and $s_i=(\sum_{j=0}^{i-1}\ell_j)\bmod (n-k)$, for $i>0$. We examine the vertices $u_0,u_1,\ldots$ of~$U$ one by one. Vertex $u_i$ removes bit 
$a_{i,s_i+j}$, for $j\in[\ell_i]$, from its tag and appends it to the tag of vertex~$v_{s_i+j}$.
In both cases, $s_i+j$ is computed modulo $n-k$.
As the tags of the vertices of~$V$ acquire bits in a round-robin manner, none of them ends up with more than $L=\lceil ({\sum_{i=0}^{k-1} \ell_i})/{(n-k)} \rceil$ bits.

Given the indices and the tags $ind_1(u),adj_1(u)$ and $ind_2(v),adj_2(v)$ of two vertices~$u\in U$ and~$v\in V$, and given all the $\ell_i$'s, it is easy to check whether they are adjacent. Suppose that $i=ind_1(u)$ and $j=ind_2(v)$. If $j$ is not in the (possibly wrapped) interval $[s_i,s_{i+1})$, then the adjacency bit $a_{i,j}$ is contained in~$adj_1(u)$. Otherwise, it is contained in~$adj_2(v)$. Furthermore, the position of~$a_{i,j}$ in~$adj_1(u)$ or $adj_2(v)$ is easily calculated.
If $a_{i,j}$ is in $adj_1(u)$, then it is in position~$j$,
if $j<s_i<s_{i+1}$, in position~$j-\ell_i$, if $s_i<s_{i+1}\le j$, or in position $j-s_{i+1}$, if $s_{i+1}\le j<s_i$. If $a_{i,j}$ is not in $adj_1(u)$, then it is position $\floor{\frac{\bar{s}_i+((j-s_i)\bmod(n-k)}{n-k}}$ of $adj_2(v)$, where $\bar{s}_0=0$ and $\bar{s}_i=\sum_{j=0}^{i-1}\ell_j$, for $i>0$, where the summation this time is not modulo $n-k$. (Note, in particular, that $u$ only needs to know~$\bar{s}_i$ and~$\ell_i$.)
\end{proof}

A slightly improved spreading lemma, used to fine-tune our results, can be found in Appendix~\ref{sec:spread2}.


\section{Directed graphs}\label{sec:directed}

Let $G=(V,E)$ be a directed graph on $V=[n]$. As we saw in the introduction, the na\"{\i}ve labeling scheme of $n$-vertex directed graphs, without self-loops, assigns to each vertex an $(n+\lceil\lg n\rceil-1)$-bit label.
We provide the first improvement over this na\"{\i}ve bound. Furthermore, our bound is optimal up to a small additive constant.


\begin{theorem}\label{thm:directed} For any $n\ge 100$, there is an adjacency labeling scheme for $n$-vertex directed graphs that assigns each vertex an $(n+4)$-bit label.
\end{theorem}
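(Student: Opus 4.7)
I would combine the two building blocks from Section~\ref{sec:blocks}: the run encoding of Lemma~\ref{lem:bipartite} to compress one of the bipartite cross-blocks of the adjacency matrix, and the spreading of Lemma~\ref{lem:spread} to rebalance label lengths. Specifically, I partition $V = [n]$ into a small set $U$ of size $k \approx \lfloor \lg n \rfloor$ and its complement $W$ of size $n - k$, and split the $n \times n$ adjacency matrix into four blocks $A_{UU}, A_{UW}, A_{WU}, A_{WW}$. The two ``diagonal'' blocks $A_{UU}$ and $A_{WW}$ are stored directly: each $u_i \in U$ carries its $(k-1)$-bit out-row in $A_{UU}$, and each $w_j \in W$ carries its $(n-k-1)$-bit out-row in $A_{WW}$ (under the column order inherited from $W$'s indexing below).

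For the $A_{UW}$ block I apply Lemma~\ref{lem:bipartite}, which both assigns distinct indices $\mathrm{ind}(w) \in [n-k]$ to the vertices of $W$ (permuting the columns via a gray code) and gives each $u_i$ a tag of length $\ell_i^{UW} = L(n-k, i)$ with total bounded by $\sum_i \ell_i^{UW} \le 2k + \bar H(2^{k-1}/n)\, n$. For the $A_{WU}$ block I apply Lemma~\ref{lem:spread}, viewing each $u_i$'s $(n-k)$-bit initial ``row'' as its vector of in-edges from $W$, and choosing per-vertex parameters $\ell_i^{\mathrm{spr}} := \ell_i^{UW} + \lceil \lg k \rceil - 4$ (truncated to $[0, n-k]$). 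The label of $u_i$ is then the concatenation of a type-bit $0$, the $\lceil \lg k \rceil$-bit encoding of $i$, its $A_{UU}$-row, the run-encoded $\mathrm{adj}^{UW}(u_i)$, and its remaining $(n-k) - \ell_i^{\mathrm{spr}}$ in-edge bits; by construction this telescopes to exactly $n + 4$ bits. The label of $w_j$ is a type-bit $1$, the index $j$, the direct $A_{WW}$-row, and the $L^{\mathrm{spr}}$-bit spread tag produced by Lemma~\ref{lem:spread}, padded to $n + 4$ bits if shorter.

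The essential length check is that $L^{\mathrm{spr}} + \lceil \lg(n-k) \rceil \le k + 4$. Using $\sum \ell_i^{\mathrm{spr}} = \sum \ell_i^{UW} + k(\lceil \lg k \rceil - 4)$ and the bound following Lemma~\ref{lem:L2}, one gets $L^{\mathrm{spr}} = \lceil (\sum \ell_i^{\mathrm{spr}})/(n-k) \rceil$ reducing to roughly $\lceil \bar H(2^{k-1}/n) \rceil$ plus a $O(k \lg k / n)$ correction. I would then exploit the explicit numerical values $\bar H(\tfrac12) = 3.156\ldots$, $\bar H(\tfrac14) = 2.156\ldots$, and $\bar H(\tfrac18) = 1.345\ldots$ quoted at the end of Section~\ref{sub:unbalanced}, tuning $k$ to be $\lfloor \lg n \rfloor$, $\lfloor \lg n \rfloor - 1$, or in a narrow window $\lfloor \lg n \rfloor - 2$ as $n$ moves between consecutive powers of $2$, so that $\lceil \bar H(2^{k-1}/n) \rceil + \lceil \lg(n-k) \rceil \le k + 4$ is satisfied for every $n \ge 100$. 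Decoding is routine: the type-bits and indices are read from fixed prefixes, and the four cases use direct lookup ($UU$, $WW$), the decoder of Lemma~\ref{lem:bipartite} ($UW$), and the decoder of Lemma~\ref{lem:spread} ($WU$), each of which needs only the two labels together with $n$.

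The main obstacle is precisely the above parameter balancing: the capacity $k + 4 - \lceil \lg(n-k) \rceil$ on the $W$-side can be as small as $2$, while $\bar H(2^{k-1}/n)$ approaches $\bar H(\tfrac12) \approx 3.16$ when $n$ is just above a power of $2$, so a naive choice of $k$ fails to fit. Threading the needle across all $n \ge 100$ is the crux of the proof; the refined spreading lemma stated in Appendix~\ref{sec:spread2} is specifically advertised to fine-tune such situations, and I expect to invoke it to close the constant-factor gap in the few residual ranges of $n$ where the plain Lemma~\ref{lem:spread} just misses.
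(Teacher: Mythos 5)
Your architecture is the paper's: split off a set of $k\approx\lg n$ vertices, run-encode one cross-block (Lemma~\ref{lem:bipartite}), spread the other cross-block (Lemma~\ref{lem:spread}), store the diagonal blocks raw, and telescope the $U$-side lengths. The gap is in your label format for the large side. You spend $1+\lceil\lg(n-k)\rceil$ bits on a type-bit plus a separate $W$-index, which for most $n$ equals $\lceil\lg n\rceil+1$; the paper instead assigns every vertex a single combined index in $[n]$ (vertices of the small part get indices in $[k]$, the rest get $k+ind_2(\cdot)$), so membership in $U$ versus $W$ is read off from one $\lceil\lg n\rceil$-bit field and no type-bit is needed. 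That one extra bit is exactly the bit you cannot afford. Your own constraint $L^{\mathrm{spr}}+\lceil\lg(n-k)\rceil\le k+4$ then fails on a whole range of $n$, and---this is the key point---\emph{tuning $k$ cannot fix it}: decrementing $k$ by one lowers the right-hand side by one while lowering $L^{\mathrm{spr}}\approx\lceil\bar{H}(2^{k-1}/n)\rceil$ by at most one (since $\bar{H}(\alpha)-\bar{H}(\alpha/2)=H(\alpha)\le 1$), so you never gain. Concretely, for $n=1050$ one has $\lceil\lg(n-k)\rceil=11$ for all admissible $k\le\lfloor\lg n\rfloor=10$, and $L^{\mathrm{spr}}$ is $4,3,2$ for $k=10,9,8$ against capacities $k+4-11=3,2,1$; every choice misses by one bit, and $k=11$ is excluded because Lemma~\ref{lem:bipartite} needs $k\le\lg n$ (and $2^{k-1}\le n/2$). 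So as written your scheme yields $(n+5)$-bit labels for infinitely many $n$.

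Your proposed rescue via Appendix~\ref{sec:spread2} does not close this by itself: Lemma~\ref{lem:spread2} never reduces the total number of spread bits, it only shifts which vertices of $V$ receive the $(L+1)$-bit versus $L$-bit tags, so it saves a bit only when paired with a variable-length (prefix-free) index encoding in which exactly the vertices with shorter indices absorb the longer tags --- that is the Appendix~\ref{sec:indices} plus Appendix~\ref{sec:spread2} combination the paper uses, and only to push $n+4$ down to $n+3$. For the $n+4$ bound itself no such fine-tuning is needed: fix $k=\lceil\lg n\rceil-2$ once and for all, so that $2^{k-1}/n\le\frac14$ and hence $\bar{H}(2^{k-1}/n)\le\bar{H}(\frac14)<2.16$ uniformly in $n$, use the combined $[n]$-index, and fold the $(k-1)$ bits of the $G[A]$ row into the spreading amounts ($\ell_i'=(k-1)+\ell_i$) so that the small-side labels come out at $n+2$ and the large-side labels at $n+1+\Delta$ with $\Delta\le 3$ for $n\ge 100$. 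The delicate window near powers of two that you (correctly) worry about simply never arises under that accounting.
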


\begin{proof} Let $G=(V,E)$ where $V=[n]$ be a directed graph. Partition the vertex set~$V$ into two sets $A=[k]$, and $B=[k,n)$,
where $k=\lceil\lg n\rceil-2$. We can view~$G$ as the disjoint union of~$G[A]$, $G[B]$, $G[A,B]$ and $G[B,A]$, where $G[A]$ and $G[B]$ are the induced directed graphs on~$A$ and $B$, respectively, $G[A,B]=(V,E\cap (A\times B))$ is composed of the edges of~$G$ from~$A$ to~$B$, and $G[B,A]=(V,E\cap (B\times A))$ is composed of the edges of~$G$ from~$B$ to~$A$. The graphs $G[A,B]$ and $G[B,A]$ correspond to  the undirected bipartite graphs $G[A,B]=(A,B,E\cap (A\times B))$ and $G[B,A]=(A,B,E\cap (B\times A))$, obtained by ignoring the direction of the edges.

We start by using the labeling scheme for extremely unbalanced bipartite graphs of Lemma~\ref{lem:bipartite} to represent $G[A,B]$. We assign arbitrary distinct indices to the vertices of~$A$. For concreteness, let $ind_1(i)=i$, for $i\in A$. The scheme of Lemma~\ref{lem:bipartite} assigns indices $ind_2(j)\in [n-k]$ to the vertices of~$B$. It also assigns each vertex $i\in A$ an $\ell_i$-bit tag $adj_1(i)$, where $\ell_i=L(n-k,i)\le L(n,i)\le \lceil H(2^i/n)n \rceil + 1$.


Next, we use the spreading scheme of Lemma~\ref{lem:spread} to represent $G[B,A]$, viewed as a bipartite graph $(A,B,E'')$.
We use the indices $ind_1(i)$ and $ind_2(j)$ assigned to the vertices of~$A$ and~$B$ above.
We apply Lemma~\ref{lem:spread} with $\ell'_i=(k-1)+\ell_i$, for $i\in [k]$.
As $k=\lceil\lg n\rceil-2$ and $0\le i\le k-1$, we have $\ell_i\le \lceil H(2^{k-1}/n)n\rceil +1 \le \lceil H(1/4)n\rceil+1\le \lceil 0.82n\rceil+1$. Therefore, $\ell'_i\le n-k$, for $i\in [k]$, as required by~Lemma~\ref{lem:spread}. Vertex~$i$ of~$A$ is thus assigned an $((n-k)-\ell'_i)$-bit tag $adj_2(i)$. Each vertex of~$B$ is assigned a $\Delta$-bit tag $adj_3(j)$, where $\Delta = \lceil (\sum_{i=0}^{k-1} ((k-1)+\ell_i))/(n-k) \rceil$.

Next, we use the na\"{\i}ve labeling scheme to encode $G[A]$ and $G[B]$.
We again use the indices~$ind_1(i)$ and~$ind_2(j)$ already assigned to the vertices.
Each vertex $i\in A$ gets a $(k-1)$-bit tag $adj_4(i)$. Each vertex $j\in B$ gets an $((n-k)-1)$-bit tag $adj_5(j)$.

Combing the indices ~$ind_1$ and~$ind_2$ assigned separately to the vertices of~$A$ and $B$, we let
$ind(i)=ind_1(i)$ if $i\in A$, and $ind(j)=k+ind_2(j)$, if $j\in B$. Note that now $ind(u)\in[n]$ for every $u\in V=A\cup B$. For simplicity, we also use $ind(u)$, where $u\in V$, to denote the $\lceil\lg n\rceil$-bit binary encoding of $ind(u)$.

Finally, we assign vertex~$i$ of~$A$ a label composed of the concatenation of $ind(i), adj_1(i), adj_2(i)$ and~$adj_4(i)$, and vertex~$j$ of~$B$ a label composed of the concatenation of $ind(j), adj_3(j)$ and $adj_5(j)$.

Vertex~$i$ of~$A$ is thus assigned a label of length
\[ \lceil\lg n\rceil \;+\; \ell_i  \;+\;  ((n-k)-(k-1)-\ell_i) \;+\; (k-1) \;=\; \lceil\lg n\rceil + (n-k) \;=\; n+2\;. \]
Each vertex of~$B$ is assigned a label of length
\[ \lceil\lg n\rceil \;+\; \Delta \;+\; (n-k-1) \;=\; n+1+\Delta \;. \]
Now,
\[ \Delta \;=\; \left\lceil \frac{\sum_{i=0}^{k-1}((k-1)+\ell_i)}{n-k} \right\rceil \;\le\;
\left\lceil \frac{k(k+1) + n\sum_{i=0}^{k-1} H(2^i/n)}{n-k} \right\rceil \;\le\;
\left\lceil \frac{k(k+1)}{n-k} + \frac{n}{n-k}\bar{H}(2^{k-1}/n) \right\rceil \;.
\]
As $k=\lceil\lg n\rceil-2$, we have $2^{k-1}/n\le\frac{1}{4}$, and thus $\bar{H}(2^{k-1}/n)\le \bar{H}(\frac14)< 2.16$. It is not difficult to verify that for $n\ge 100$ we have $\frac{k(k+1)}{n-k}<0.5$ and $\frac{n}{n-k}\bar{H}(\frac14)<2.5$, and thus $\Delta\le 3$.

The label of each vertex is thus composed of at most $n+4$ bits. We can easily pad the labels of the vertices so that they all contain exactly $n+4$ bits.

Given the labels of two vertices it is possible to determine whether they are adjacent. The index of a vertex, residing in the first $\lceil\lg n\rceil$ bits of its label, tells us whether the vertex is a vertex of~$A$ or of~$B$. It also allows us to break the label into the different tags composing it. Given the indices of two vertices we can easily decide which of the tags to use to determine whether the two vertices are adjacent.
\end{proof}

Theorem~\ref{thm:directed} 
is also valid for $n<100$, but for that we need to rely on the exact definition of~$L(n-k,i)$ and not just on the convenient upper bounds $L(n-k,i)\le L(n,i)\le \lceil H(2^i/n) \rceil+1$.

The $n+4$ bound of Theorem~\ref{thm:directed} can be improved to $n+3$. When $n$ is a power of~$2$, for example, this is easy. Note that in this case $2^{k-1}/n=\frac18$. As $\bar{H}(\frac18)<1.346$, we get that $\Delta\le 2$. Essentially the same calculation works if $n$ is close, from below, to a power of~$2$, as then $2^{k-1}/n$ is not much larger than $\frac18$. To get the $n+3$ for all sufficiently large values of~$n$, some more work needs to be done. We need to use the slightly more economical way of encoding indices, described in Appendix~\ref{sec:indices}, and the modified spreading lemma of Appendix~\ref{sec:spread2}. The details can be found in Appendix~\ref{sec:directed2}.

The results in this section are for directed graphs without self-loops. Directed graphs with self-loops could of course be handled by adding a single bit to each label.

We defer the treatment of efficient decoding issues to Section~\ref{sec:efficiency}.

\section{Undirected graphs}\label{sec:undirected}

Our scheme for undirected graphs is slightly more complicated than the scheme of directed graphs, as we need to break the graph into more parts. The main ideas, however, are the same.
We start with a simple $(\lfloor n/2\rfloor +\lceil\lg n\rceil)$-bit scheme for $n$-vertex undirected graphs which is implicit in Moon \cite{moon1965minimal}.

\begin{theorem}\label{thm:moon}\mbox{\rm [Moon\cite{moon1965minimal}]} For any $n\ge 1$, there is a labeling scheme
that receives an $n$-vertex undirected graph $G=(V,E)$, with distinct indices $ind(u)\in[n]$ assigned to its vertices, and assigns each vertex an $\lfloor n/2\rfloor$-bit adjacency information tag $adj(u)$. For every two vertices $u,v\in V$, given $(ind(u),adj(u))$ and $(ind(v),adj(v))$ it is possible to determine whether $(u,v)\in E$.
\end{theorem}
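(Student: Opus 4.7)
The plan is to use a \emph{cyclic sliding window} of size $m=\lfloor n/2\rfloor$ so that each vertex records only half of its adjacency row, chosen in a way that for every unordered pair at least one of the two endpoints stores the relevant bit. Concretely, for a vertex $u$ with $ind(u)=i$, I would define $adj(u)\in\{0,1\}^m$ by letting its $k$-th bit, for $k\in\{1,\dots,m\}$, equal $1$ iff $u$ is adjacent in $G$ to the unique vertex $w$ with $ind(w)=(i+k)\bmod n$. Since the indices form a bijection between $V$ and $[n]$, this is well defined. The length is exactly $\lfloor n/2\rfloor$ as required.

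For decoding, given $(i,adj(u))$ and $(j,adj(v))$ with $i\ne j$, compute the ``forward distance'' $d=(j-i)\bmod n\in\{1,\dots,n-1\}$. If $d\le m$, then $v$ is precisely the vertex that sits $d$ steps past $u$ in the cyclic order, so the $d$-th bit of $adj(u)$ is the adjacency bit. Otherwise $d>m$, and then $d'=(i-j)\bmod n=n-d$ satisfies $d'\le m$; this uses only the elementary inequality $n-m\le m+1$, which holds because $m=\lfloor n/2\rfloor$. In that case the $d'$-th bit of $adj(v)$ is the adjacency bit. Hence every pair is covered by at least one stored bit.

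The only subtlety is the even case: when $n=2m$ and $d=d'=m$, both $adj(u)$ and $adj(v)$ store a bit for the edge $\{u,v\}$. But because $G$ is undirected, both bits must be equal, so either one correctly answers the query, and the decoder may use, say, the one from the vertex with the smaller index. For $n=1$ the tag is empty and there is nothing to decide, so the construction degenerates correctly.

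There is no real obstacle here; the argument is a direct verification that the sliding window of length $\lfloor n/2\rfloor$ covers every cyclic gap in $\{1,\dots,n-1\}$ from at least one side. The mild case analysis between $n=2m$ and $n=2m+1$ is the only thing to watch, and it is settled by the identity $d+d'=n$ together with $n\le 2m+1$.
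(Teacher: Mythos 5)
Your construction and decoding rule are exactly the paper's: arrange the vertices on a cycle by index, have each vertex store its adjacencies to the $\lfloor n/2\rfloor$ vertices following it, and resolve a query by whichever endpoint sees the other within its window. The proof is correct and essentially identical to the one in the paper, including the observation about the duplicated bit when $n$ is even.
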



\begin{proof} Let $u_i\in V$ be the vertex for which $ind(u_i)=i$. Let $A=(a_{i,j})$ be the adjacency matrix of the graph where the $i$-th row and column correspond to~$u_i$. The tag $adj(u_i)$ is composed of the $\lfloor n/2\rfloor$-bit string $a_{i,i+1},a_{i,i+2},\ldots,a_{i,i+\lfloor n/2\rfloor}$, where the addition in the second index is modulo~$n$. This corresponds to arranging the vertices $u_0,u_2,\ldots,u_{n-1}$ in a circle, with each vertex remembering its adjacencies to the $\lfloor n/2\rfloor$ vertices following it in the circle.

Given $(ind(u),adj(u))$ and $(ind(v),adj(v))$ we can easily determine whether $(u,v)\in E$.
If $ind(v)-ind(u) \le \lfloor n/2\rfloor$,
the answer is $adj(u)[ind(v)-ind(u)]$; Otherwise, it is $adj(v)[ind(u)-ind(v)]$, where the subtractions $ind(v)-ind(u)$ and $ind(u)-ind(v)$ are interpreted modulo~$n$.
\end{proof}

We note that when~$n$ is even, there is slight redundancy in the scheme just describe, as the adjacency bit $a_{i,i+n/2}$, for every $i\in[n]$, is stored twice. We exploit that later to fine-tune our results.

Theorem~\ref{thm:moon} yields, of course, an $(\floor{n/2}+\lceil\lg n\rceil)$-bit labeling scheme.
Using our techniques, we can reduce the size of the labels to $\lfloor n/2\rfloor +6$.

\begin{theorem}\label{thm:undirected} For any $n\ge 400$, there is a adjacency labeling scheme for $n$-vertex undirected graphs that assigns each vertex an $(\lfloor n/2\rfloor +6)$-bit label.
\end{theorem}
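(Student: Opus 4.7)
My plan is to adapt the construction of Theorem~\ref{thm:directed} to the symmetric (undirected) setting. The target $\lfloor n/2\rfloor+6$ is essentially Moon's $\lfloor n/2\rfloor+\lceil\lg n\rceil$ bound of Theorem~\ref{thm:moon} with the $\lceil\lg n\rceil$ indexing overhead reduced to $O(1)$, using the same kind of compression that worked for directed graphs. I partition $V=A\cup B$ with $|A|=k$ and $|B|=n-k$, choosing $k\approx\lceil\lg n\rceil-2$, and decompose $G=G[A]\cup G[A,B]\cup G[B]$. Each vertex's label will be the concatenation of a $\lceil\lg n\rceil$-bit index with several short tags that together encode these three subgraphs.

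The concrete construction proceeds as follows. First, apply the run-encoding scheme of Lemma~\ref{lem:bipartite} to $G[A,B]$: this assigns distinct indices $ind_2(v)\in[n-k]$ to the vertices of~$B$ and equips each $A$-vertex $u_i$ with an $\ell_i$-bit tag $adj_1(i)$ where $\ell_i\le\lceil H(2^i/n)n\rceil+1$. Second, use a variant of the spreading scheme of Lemma~\ref{lem:spread} (modified to move compressed rather than raw bits) to transfer the portion of $adj_1(i)$ exceeding a common threshold onto the $B$-vertices; by the computation already performed in Theorem~\ref{thm:directed}, using $\sum_i\ell_i\le 2k+n\bar{H}(2^{k-1}/n)\le 2k+2.16\,n$, each $B$-vertex receives only a $\Delta$-bit tag with $\Delta=O(1)$. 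Third, apply Moon's scheme of Theorem~\ref{thm:moon} to $G[B]$ using the indices $ind_2$ just assigned, giving each $B$-vertex a $\lfloor(n-k)/2\rfloor$-bit Moon tag. Finally, apply Moon's scheme inside~$A$ to give each $A$-vertex a $\lfloor k/2\rfloor$-bit tag. Pad and concatenate into labels of uniform length; decoding reads the two indices, infers which type each vertex is, and consults whichever tag holds the relevant adjacency bit.

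The main obstacle is balancing the $B$-vertex labels: a na\"{\i}ve accounting produces length $\lceil\lg n\rceil+\Delta+\lfloor(n-k)/2\rfloor\approx\lfloor n/2\rfloor+\lceil\lg n/2\rceil$, which overshoots the $\lfloor n/2\rfloor+6$ target by $\Theta(\log n)$. This is precisely why, as announced at the start of this section, the scheme is slightly more complicated than the directed one and needs to break the graph into more parts. To close the gap I plan to introduce a second small subset $A'\subseteq B$ of size $\Theta(\log n)$, placed at positions in the Moon circle that every $B$-vertex sees within its forward half, and to run a second round of run-encoding plus spreading on the bipartite graph $G[A',B\setminus A']$ so that the Moon tags of $B\setminus A'$ no longer have to carry their $A'$-adjacencies. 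A careful choice of the two sizes $k$ and $|A'|$ and of the two spreading thresholds, combined with the numerical inequalities $\bar{H}(1/4)<2.16$ and $\bar{H}(1/8)<1.35$ already exploited in Theorem~\ref{thm:directed}, brings the label size of every vertex down to $\lfloor n/2\rfloor+6$ whenever $n\ge 400$.
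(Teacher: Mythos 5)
Your high-level plan (pull roughly $2\lceil\lg n\rceil$ vertices out of the Moon circle so its tags shrink by about $\lceil\lg n\rceil$, and compress those vertices' adjacencies to the rest) is the right architecture and is what the paper does, but the mechanism you describe has two concrete gaps. First, your step 2 proposes to spread ``the portion of $adj_1(i)$ exceeding a common threshold'' onto the $B$-vertices. The spreading of Lemma~\ref{lem:spread} only works for \emph{raw} adjacency bits because each moved bit $a_{i,j}$ is appended to the tag of $v_j$ itself, the vertex that bit concerns; this is what makes the edge $(u_i,v_j)$ decodable from the two labels alone. Chopping a run-length-encoded row into pieces and handing the pieces to arbitrary $B$-vertices destroys this locality: to decide adjacency of $u_i$ and $v_j$ the decoder would need fragments of $u_i$'s compressed row that sit in the labels of vertices other than $v_j$. (It is also pointless in the undirected setting: each pair $\{u,v\}$ with $u\in A$ has a single bit, already encoded at $u$, so there is nothing left to push toward $B$, and the $A$-labels are short anyway; the spreading in Theorem~\ref{thm:directed} exists to let $A$-vertices absorb the \emph{reverse-direction} raw bits so that $B$-vertices need not store their $\approx 2k$ bits of $A$-adjacency.)

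Second, your fix via $A'\subseteq B$ runs into a permutation conflict: Lemma~\ref{lem:bipartite} works by \emph{choosing} the column order, i.e., it assigns the indices $ind_2$ to the large side. After run-encoding $G[A,B]$ the indices of $B$ are fixed, so you cannot run-encode $G[A',B\setminus A']$ a second time over the same columns; and replacing either round by pure spreading forces some vertex class to carry $\Theta(\lg n)$ extra raw bits. (The ``every $B$-vertex sees $A'$ in its forward half'' requirement is also unachievable on a circle, but the real point is that $A'$ must simply be excluded from the Moon circle.) The paper resolves both issues simultaneously: it takes two small sets $A_0,A_1$ with $|A_0|=|A_1|=k=\lceil\lg n\rceil-3$ and splits the remainder into two halves $B_0,B_1$ of size $\approx n/2$; it run-encodes $G[A_0,B_0]$ and $G[A_1,B_1]$ (disjoint column sets, so each gets its own permutation, and the rows now have length $\approx n/2$ so $\ell_i\le H(2^{i+1}/n)\tfrac n2+2$), applies the raw-bit spreading of Lemma~\ref{lem:spread} to the cross graphs $G[A_0,B_1]$ and $G[A_1,B_0]$ with $\ell_i'=k+\ell_i$ so that the space saved by compression absorbs the cross adjacencies and each $B$-vertex receives only $\Delta\le 3$ bits, and finally applies Moon's scheme to $G[A_0\cup A_1]$ and to $G[B_0\cup B_1]$, whose tags have length $\lfloor n/2\rfloor-k$. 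You would need to restructure your steps 2--4 along these lines before the $\lfloor n/2\rfloor+6$ count can be verified.
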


\begin{proof} Let $G=(V,E)$ be an undirected graph where $V=[n]$. We partition~$V$ into four disjoint sets $A_0,A_1,B_0$ and $B_1$ were $|A_0|=|A_1|=k=\lceil\lg n\rceil - 3$, $|B_0|=\lceil\frac{n}{2}\rceil-k$ and $|B_1|=\lfloor\frac{n}{2}\rfloor-k$. For concreteness, we let $A_0=[0,k)$, $B_0=[k,\lceil\frac{n}{2}\rceil)$, $A_1=[\lceil\frac{n}{2}\rceil,\lceil\frac{n}{2}\rceil+k)$ and $B_1=[\lceil\frac{n}{2}\rceil+k,n)$.
We partition~$G$ into the disjoint union of the four bipartite graphs $G[A_0,B_0], G[A_0,B_1], G[A_1,B_0], G[A_1,B_1]$ and the two undirected graphs $G[A_0\cup A_1]$ and $G[B_0\cup B_1]$.

We assign arbitrary distinct indices to the vertices of $A_0$. For concreteness, we let $ind'(i)=i$, for every $i\in A_0$. Similarly, we let $ind'(i)=i-\lceil\frac{n}{2}\rceil$, for every $i\in A_1$.
We now use Lemma~\ref{lem:bipartite} to encode $G[A_0,B_0]$ and $G[A_1,B_1]$. This assigns distinct indices $ind'(j)\in [\lceil\frac{n}{2}\rceil-k\,]$ to all vertices $j\in B_0$, and distinct indices $ind'(j)\in [\lfloor\frac{n}{2}\rfloor-k\,]$ to all vertices $j\in B_1$. We define distinct indices $ind(u)\in [n]$ to all vertices of~$V$ as follows. If $u\in A_0$, then $ind(u)=ind'(u)$. If $u\in B_0$, then $ind(u)=ind'(u)+k$. If $u\in A_1$, then $ind(u)=ind'(u)+\lceil\frac{n}{2}\rceil$. Finally, if $u\in B_1$, then $ind(u)=ind'(u)+\lceil\frac{n}{2}\rceil+k$.

The labeling scheme of Lemma~\ref{lem:bipartite} also assign
the $i$-th vertices of~$A_0$ and $A_1$ an $\ell_i$-bit tag, where $\ell_i=L(\lceil\frac{n}{2}\rceil-k,i)\le L(\lfloor \frac{n}{2}\rfloor,i)$. (We refrain from explicitly naming the tags.)

To compensate for the $\ell_i$ bits assigned to the $i$-th vertex of~$A_0$ and the $i$-th vertex of~$A_1$, and to leave room for the representation of $G[A_0\cup A_1]$, we use Lemma~\ref{lem:spread} to represent $G[A_0,B_1]$ and $G[A_1,B_0]$, with $\ell'_i=k+\ell_i$, for $i\in [k]$. It is easy to verify that $\ell'_i\le \lfloor\frac{n}{2}\rfloor-k$, for $i\in[k]$, as required by Lemma~\ref{lem:spread}. The $i$-th vertices of~$A_0$ and $A_1$ thus get tags composed of
$(\lceil\frac{n}{2}\rceil-k)-\ell'_i$ bits, and each vertex of $B_0\cup B_1$ gets a tag composed of $\Delta = \lceil (\sum_{i=0}^{k-1} (k+\ell_i)/(\lfloor \frac{n}{2}\rfloor-k) \rceil$ bits. (Tags are padded, if necessary.)

Finally, we use the simple labeling scheme of Theorem~\ref{thm:moon} to represent $G[A_0\cup A_1]$ and $G[B_0\cup B_1]$. We again use the indices already assigned to the vertices. Each vertex of $A_0\cup A_1$ is thus assigned a $k$-bit tag, while each vertex of $B_0\cup B_1$ is assigned a $(\lfloor\frac{n}{2}\rfloor-k)$-bit tag.

As in the proof of Theorem~\ref{thm:directed}, the label assigned to a vertex is the concatenation of the binary representation of its index, and the tags assigned to it for each part of the graph it participates in.

The $i$-th vertices of $A_0$ and $A_1$ are thus assigned a label of length
\[ \lceil\lg n\rceil \;+\; \ell_i \;+\; \left(\left(\left\lceil\frac{n}{2}\right\rceil-k\right)-(k+\ell_i)\right) \;+\; k \;=\; \left\lceil\frac{n}{2}\right\rceil+3 \;. \]
Each vertex of $B_0\cup B_1$ is assigned a label of length
\[ \lceil\lg n\rceil \;+\; \Delta \;+\; \left(\left\lfloor\frac{n}{2}\right\rfloor-k\right)
\;=\; \left\lfloor\frac{n}{2}\right\rfloor+3+\Delta \;. \]
Now, as
\[\ell_i \;\le\; L\left(\left\lceil\frac{n}{2}\right\rceil-k,i\right)\;\le\; L\left(\left\lfloor \frac{n}{2} \right\rfloor,i\right) \;\le\; H\left(\frac{2^i}{n/2}\right)\frac{n}{2}+2 \;\le\; H\left(\frac{2^{i+1}}{n}\right)\frac{n}{2}+2\;,\]
we have
\[ \Delta \;=\; \left\lceil \frac{\sum_{i=0}^{k-1}(k+\ell_i)}{\lfloor \frac{n}{2}\rfloor-k} \right\rceil \;\le\;
\left\lceil \frac{k(k+2) + \frac{n}{2}\sum_{i=0}^{k-1} H(2^{i+1}/n)}{\lfloor \frac{n}{2}\rfloor-k} \right\rceil \;\le\;
\left\lceil \frac{k(k+2)}{\lfloor \frac{n}{2}\rfloor-k} + \frac{\frac{n}{2}}{{\lfloor \frac{n}{2}\rfloor-k}}\bar{H}(2^{k}/n) \right\rceil \;.
\]
As $k=\lceil\lg n\rceil-3$, we have $2^{k}/n\le\frac{1}{4}$, and thus $\bar{H}(2^{k}/n)\le \bar{H}(\frac14)< 2.16$. It is not difficult to verify that for $n\ge 400$ we have $\frac{k(k+2)}{\lfloor\frac{n}{2}\rfloor-k}<0.5$ and $\frac{\frac{n}{2}}{\lfloor\frac{n}{2}\rfloor-k}\bar{H}(\frac14)<2.5$, and thus $\Delta\le 3$.

Each vertex is therefore assigned a label of at most $\lfloor\frac{n}{2}\rfloor+6$ bits. Given the labels of two vertices it is possible to decide whether they are adjacent or not.
\end{proof}

A different approach that can be used to prove Theorem~\ref{thm:undirected} is the following.
We partition the vertex set $V=[n]$ into three sets $A,B$ and $C$, where $|A|=k$, $|B|=\lceil\frac{n-k}{2}\rceil$ and $|C|=\lfloor\frac{n-k}{2}\rfloor$. We partition the graph $G=(V,E)$ into $G[A,B],G[A,C],G[B,C],G[A],G[B]$ and $G[C]$. We use recursion to assign indices and tags to~$G[C]$. We use Lemma~\ref{lem:bipartite} to assign indices and tags to $G[A,B]$. Once all indices are assigned, we use Lemma~\ref{lem:spread} to assign tags to $G[A,C]$. We use a simple scheme for balanced bipartite graphs to
assign tags to $G[B,C]$
(see Theorem~\ref{thm:balanced} below). Finally, we use the Moon's scheme (Theorem~\ref{thm:moon}) to assign tags to $G[A]$ and~$G[B]$. The length of the labels produced seems to be essentially the same as those produced in the proof of Theorem~\ref{thm:undirected}.

A improved $(\ceil{n/2}+4)$-bit labeling scheme for $n$-vertex undirected graphs can be found in Appendix~\ref{sec:undirected2}.

\section{Tournaments}\label{sec:tournaments}

A \emph{tournament} is a directed graph $G=(V,E)$ in which every two vertices are connected by an edge in one of the possible directions, i.e., for every $u\ne v\in V$, either $(u,v)\in E$ or $(v,u)\in E$, but not both.

There is a trivial correspondence between tournaments on $V=[n]$ and undirected graphs on $V=[n]$. Given a tournament $G=(V,E)$, we can construct an undirected graph $G'=(V,E')$ where $E'=\{\{u,v\} \mid (u,v)\in E \text{ and } u<v\}$. Conversely, given an undirected graph $G'=(V,E')$, we can construct a tournament $G=(V,E)$ where $E=\{ (u,v) \mid (\{u,v\}\in E' \text{ and } u<v) \text{ or } (\{u,v\}\not\in E' \text{ and } u>v\}$.

It is thus tempting to claim that any labeling scheme for undirected graphs can also be used as a labeling scheme for tournaments, and vice versa. This, however, is not necessarily the case. The problem is that to check whether $u<v$ the vertices need to know their original indices. In our labeling scheme for undirected graphs the labels of the vertices do not retain this information.

However, even though our labeling scheme for undirected graphs assigns new indices to the vertices, it does so in a way that can still be used to represent tournaments. Recall that the labeling schemes partitions~$V$ into four disjoint sets $A_0,A_1,B_0$ and $B_1$. The scheme keeps the original indices of the vertices of~$A_0\cup A_1$ but permutes the indices of the vertices of~$B_0$ and those of~$B_1$. However, these two permutations depend only on $G[A_0,B_0]$ and $G[A_1,B_1]$.

To assign labels to a tournament $G=(V,E)$ on $V=[n]$, we first partition~$V$ into $A_0,A_1,B_0$ and~$B_1$ as done by the labeling scheme for undirected graphs. We assume, without loss of generality, that $A_0=\{0,1,\ldots,k-1\}$, $B_0=\{k,\ldots,\lceil\frac{n}{2}\rceil-1\}, A_1=\{\lceil\frac{n}{2}\rceil+k-1\}$ and $B_1=\{\lceil\frac{n}{2}\rceil+k,\ldots,n-1\}$. We next generate the undirected graph $G'=(V,E')$ corresponding to the tournament~$G$ as above, i.e., $E'=\{(u,v) \mid (u,v)\in E \text{ and } u<v\}$. We now apply the labeling scheme for undirected graphs on $G'[A_0,B_0]\cup G'[A_1,B_1]$. Let $ind(u)$ denote the new index assigned to vertex $u\in V$. We may assume that $ind(u_0)<ind(v_0)<ind(u_1)<ind(v_1)$ for every $u_0\in A_0$, $v_0\in B_0$, $u_1\in A_1$ and $v_1\in B_1$. We now generate a second undirected graph $G''=(V,E'')$, where $E''=\{(u,v) \mid (u,v)\in E \text{ and } ind(u)<ind(v)\}$, and use the scheme for undirected graphs to assign labels to the vertices of~$G''$. It is not difficult to check that the indices assigned to the vertices are the same as those assigned by the first application of the labeling scheme. Thus, given the labels of two vertices in~$G''$ we can determine whether they are adjacent in~$G''$. Using their indices we can then determine the direction of the edge in the tournament~$G$. We thus have:

\begin{theorem}\label{thm:tournament} For any $n\ge 400$, there is an adjacency labeling scheme for $n$-vertex tournaments that assigns each vertex an $(\lfloor n/2\rfloor +6)$-bit label.
\end{theorem}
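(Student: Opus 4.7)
The plan is to reduce the tournament case to the undirected case via the standard bijection while carefully handling the re-indexing performed by the scheme of Theorem~\ref{thm:undirected}. Given a tournament $G=(V,E)$ on $V=[n]$, the natural idea is to form the undirected graph $G'=(V,E')$ where $\{u,v\}\in E'$ iff the unique edge between $u$ and $v$ points from the smaller to the larger endpoint (with respect to some total order on $V$), apply the undirected scheme to $G'$, and then at decoding time recover the edge direction using the order. The subtlety, as the excerpt highlights, is that the scheme of Theorem~\ref{thm:undirected} does not preserve the original indices, so a decoder holding only two labels cannot compare the \emph{original} indices of the endpoints.

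The key observation I would exploit is that the re-indexing in Theorem~\ref{thm:undirected} is highly structured: the partition $V=A_0\cup B_0\cup A_1\cup B_1$ is the canonical one (by original index), the vertices of $A_0\cup A_1$ keep their original indices, and the new indices $ind(u)$ for $u\in B_0\cup B_1$ come from the application of Lemma~\ref{lem:bipartite} to $G[A_0,B_0]$ and $G[A_1,B_1]$ respectively. In particular, the new indices still satisfy $ind(u_0)<ind(v_0)<ind(u_1)<ind(v_1)$ for every $u_0\in A_0$, $v_0\in B_0$, $u_1\in A_1$, $v_1\in B_1$, and the permutation inside each $B_j$ depends only on the bipartite piece $G[A_j,B_j]$.

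Given this, I would proceed in two stages. First, form the auxiliary undirected graph $G'$ by orienting according to the \emph{original} order and run the undirected scheme just far enough to read off the new indices $ind(u)$ — this only requires the bipartite pieces $G[A_0,B_0]$ and $G[A_1,B_1]$, which are identical in $G$ and $G'$. Second, form a \emph{new} undirected graph $G''=(V,E'')$ with $\{u,v\}\in E''$ iff the directed edge of $G$ between $u$ and $v$ goes from the endpoint with the smaller \emph{new} index to the one with the larger, and apply the scheme of Theorem~\ref{thm:undirected} to $G''$. Because $G''[A_0,B_0]=G'[A_0,B_0]$ and $G''[A_1,B_1]=G'[A_1,B_1]$ (the $A_j<B_j$ relation holds for both the original and new indices), the second application produces exactly the same index assignment $ind(\cdot)$, so the labels are self-consistent. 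Each vertex receives an $(\lfloor n/2\rfloor+6)$-bit label, inheriting the bound from Theorem~\ref{thm:undirected}.

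For decoding: given two labels, extract the indices $ind(u)$ and $ind(v)$ and, by Theorem~\ref{thm:undirected}, determine whether $\{u,v\}\in E''$. If the edge is present in $G''$, the tournament edge in $G$ is the one from the vertex with smaller $ind$ to the one with larger; otherwise it is the reverse. The main conceptual hurdle — and really the only non-routine step — is verifying that the two applications of the undirected scheme produce identical indices, which, as noted, follows because the scheme's index assignment only examines $G[A_0,B_0]\cup G[A_1,B_1]$ and those subgraphs coincide in $G'$ and $G''$. Everything else is bookkeeping, so no further calculation is needed beyond what already appears in the proof of Theorem~\ref{thm:undirected}.
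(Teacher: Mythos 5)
Your proposal is correct and follows essentially the same route as the paper: both construct $G'$ by orienting edges according to the original order, extract the new indices (which depend only on $G[A_0,B_0]$ and $G[A_1,B_1]$), then re-orient by the new indices to form $G''$ and re-apply the undirected scheme, observing that the index assignment is unchanged because the relevant bipartite pieces coincide. The decoding step via index comparison is also identical to the paper's.
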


The $\lfloor n/2\rfloor +6$ bound can again be improved to $\ceil{n/2}+4$ using the labeling scheme of Theorem~\ref{thm:undirected2}.

\section{Bipartite graphs}\label{sec:bipartite}

In this section we design an almost optimal $(\frac{n}{4}+O(1))$-bit adjacency labeling scheme for bipartite graphs.
In addition to the ideas of the previous sections, a new idea is used to obtain the result.


The following theorem follows easily form Lemma~\ref{lem:spread} (spreading).
The proof is deferred to Appendix~\ref{sec:bipartite2}.

\begin{theorem}\label{thm:biased} For every~$0\le r<\frac{n}{2}$, there is a labeling scheme for $(\frac{n}{2}-r,\frac{n}{2}+r)$-bipartite graphs, with distinct indices attached to their vertices,
that assigns each vertex
an $\lceil\frac{n}{4}-\frac{r^2}{n}\rceil$-bit 
tag. Given the indices and tags of two vertices, and given~$r$, it is possible to determine whether the two vertices are adjacent.
\end{theorem}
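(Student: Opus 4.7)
The plan is to apply Lemma~\ref{lem:spread} (spreading) directly, with $k=n/2-r$ and $n-k=n/2+r$, choosing all the $\ell_i$ equal to a common value that balances the tag lengths on the two sides of the bipartition. Set $L=\lceil n/4 - r^2/n \rceil$ and define $\ell_i = (n-k) - L = n/2+r-L$ for every $i\in[k]$.

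First I would verify that the choice is admissible. Since $L\ge 0$, we have $\ell_i\le n-k$, and since $n/4-r^2/n < n/2+r$ for $0\le r < n/2$, we have $L\le n/2+r$, so $\ell_i\ge 0$. Hence the hypothesis $0\le\ell_i\le n-k$ of Lemma~\ref{lem:spread} holds. Each vertex $u\in U$ is then assigned a tag of length $(n-k)-\ell_i = L$ bits by construction, which is exactly the desired bound.

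The main content is then the single calculation bounding the length of the tags on the $V$ side. Lemma~\ref{lem:spread} gives each $v\in V$ a tag of length $\left\lceil (\sum_i \ell_i)/(n-k) \right\rceil = \lceil k\ell/(n-k) \rceil$ where $\ell = n/2+r-L$. The balance condition $k\ell/(n-k)\le L$ is equivalent, after multiplying out, to $k(n-k)\le Ln$, i.e., to
\[
L \;\ge\; \frac{k(n-k)}{n} \;=\; \frac{(n/2-r)(n/2+r)}{n} \;=\; \frac{n}{4}-\frac{r^2}{n},
\]
which is precisely how $L$ was chosen. Since $L$ is an integer, we conclude $\lceil k\ell/(n-k)\rceil\le L$, so each $V$-tag can be padded to exactly $L$ bits as well.

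For the adjacency query, I would simply invoke the decoder of Lemma~\ref{lem:spread}, which requires the indices, the tags, and the $\ell_i$'s; here all $\ell_i$ are equal to the single value $n/2+r-L$, determined by $n$ and $r$, which the decoder is given. There is essentially no obstacle beyond this balance computation: once one sees that the geometric mean $k(n-k)/n$ at $k=n/2-r$ equals $n/4-r^2/n$, the spreading lemma immediately matches the two sides at length $L$, and the theorem follows. If one wants to remove the extra $+1$ slack that may come from two separate ceilings (one from $L$ itself and one from Lemma~\ref{lem:spread}), one can either observe that the calculation above combines the two into a single ceiling, or invoke the refined spreading lemma of Appendix~\ref{sec:spread2}.
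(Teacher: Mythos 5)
Your proposal is correct and follows exactly the paper's own argument (Appendix~\ref{sec:bipartite2}): the paper also observes that the graph has $(\frac{n}{2}-r)(\frac{n}{2}+r)=\frac{n^2}{4}-r^2$ adjacency bits and invokes the spreading lemma to distribute them evenly, giving each of the $n$ vertices a $\lceil\frac{n}{4}-\frac{r^2}{n}\rceil$-bit tag. Your write-up merely makes explicit the choice $\ell_i=(n-k)-L$ and the balance computation $k(n-k)/n=\frac{n}{4}-\frac{r^2}{n}$ that the paper leaves implicit.
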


The challenge is again to absorb the $\lceil\lg n\rceil$ index bits, and to do so in a way that works simultaneously for all values of the bias~$r$.
If~$r$ is not known in advance, we can add a $\lceil\lg n\rceil$-bit encoding of it to the labels of the vertices.  (As we only need to reconstruct~$r$ from the labels of two vertices from opposing sides,  $\lceil\frac12\lg n\rceil$ bits are actually enough, but this would not matter.)
If $r\ge \sqrt{2n\lg n}$, then as $\frac{n}{4}-\frac{r^2}{n} < \frac{n}{4}-2\lg n$, we can easily absorb the $2\lceil\lg n\rceil$ bits used to represent~$r$ and the index of each vertex and still obtain labels of size at most $\frac{n}{4}$. As expected, the difficult task is handling bipartite graphs that are almost balanced, i.e., $r<\sqrt{2n\lg n}$.

We begin by designing an adjacency labeling scheme for perfectly \emph{balanced} bipartite graphs. The proof of the following theorem is similar to the proofs of Theorem~\ref{thm:directed} and~\ref{thm:undirected}, though the graph has to be broken into yet more parts. The proof can be found in Appendix~\ref{sec:bipartite2}.

\begin{theorem}\label{thm:balanced} There is a adjacency labeling scheme for $(\frac{n}{2},\frac{n}{2})$-bipartite graphs that assigns each vertex an $(\frac{n}{4}+O(1))$-bit label. The label of each vertex is composed of a distinct index from $[n]$, and an $(\frac{n}{4}-\lg n + O(1))$-bit tag.
\end{theorem}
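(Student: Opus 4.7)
My plan is to extend the partition-and-cancel strategy used for Theorem~\ref{thm:directed} and Theorem~\ref{thm:undirected} to the bipartite setting, splitting both $U$ and $V$ into small and big parts. I would partition $U = U_A \cup U_B^{(0)} \cup U_B^{(1)}$ and $V = V_A \cup V_B^{(0)} \cup V_B^{(1)}$, with $|U_A| = |V_A| = k$ (for a parameter $k$ of order $\lg n$) and $|U_B^{(i)}| = |V_B^{(i)}| = m := (n/2-k)/2$. This decomposes $G$ into nine bipartite sub-graphs that can be processed independently, much as in Theorem~\ref{thm:undirected}, with the extra split of the ``B'' parts providing both a run-encoded and a spread companion for each small side.

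Concretely I would apply Lemma~\ref{lem:bipartite} to $G[U_A, V_B^{(0)}]$ with $U_A$ as the small side, obtaining a tag of length $\ell_i = L(m,i)$ on each $U_A$-vertex and assigning distinct indices in $[m]$ to $V_B^{(0)}$; symmetrically for $G[U_B^{(0)}, V_A]$ with $V_A$ small. I would then apply Lemma~\ref{lem:spread} to $G[U_A, V_B^{(1)}]$ with $\ell'_i = k + \ell_i$, so that each $U_A$-vertex $i$ retains $m - \ell'_i$ raw adjacency bits and offloads the remaining $\ell'_i$ bits to $V_B^{(1)}$; symmetrically for $G[U_B^{(1)}, V_A]$. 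The tiny sub-bipartite-graph $G[U_A, V_A]$ is stored raw on the $U_A$-side, contributing $k$ bits per vertex. For each of the four balanced $(m, m)$-bipartite subgraphs $G[U_B^{(a)}, V_B^{(b)}]$ I would invoke Theorem~\ref{thm:biased} with $r = 0$, giving each participating vertex an $m/2$-bit tag per subgraph.

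A size accounting mirroring that of Theorem~\ref{thm:directed} shows each $U_A$- or $V_A$-vertex receives a tag of $k + \ell_i + (m - \ell'_i) = m$ bits, with the $\ell_i$ dependence cancelling exactly because $\ell'_i = k + \ell_i$, while each $U_B^{(a)}$- or $V_B^{(b)}$-vertex receives a tag of $m/2 + m/2 + \Delta = m + O(1)$ bits. The spread load $\Delta = \lceil (k^2 + \sum_i \ell_i)/m \rceil$ is $O(1)$ by the bound $\sum_i \ell_i \leq \bar{H}(2^{k-1}/m)\,m + O(k) = O(n)$ from Lemma~\ref{lem:L2}, which applies as long as $k \leq \lg m$. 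Adding a $\lceil \lg n \rceil$-bit index yields a total label of $\lg n + m + O(1) = n/4 + \lg n - k/2 + O(1)$.

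The main obstacle is closing the gap from this expression, which at the maximal admissible $k \leq \lg m \approx \lg n$ evaluates to $n/4 + \tfrac{1}{2}\lg n + O(1)$, down to the target $n/4 + O(1)$, with the ``tag'' proper occupying $n/4 - \lg n + O(1)$ bits. I would close this gap by iterating the construction one more level: encode the four $(m, m)$-balanced bipartite subgraphs $G[U_B^{(a)}, V_B^{(b)}]$ by a recursive application of the same scheme, sharing the global index across levels so that each inner tag occupies $m/2 - \lg(2m) + O(1) = m/2 - \lg n + O(1)$ bits instead of $m/2$, and combining this with the outer-level index absorption to reach the desired bound. The delicate bookkeeping is to verify that the indices produced by Lemma~\ref{lem:bipartite} at the outer level are compatible with those required by the inner recursive calls, and that the $O(1)$ constants do not accumulate through the bounded number of recursive levels needed.
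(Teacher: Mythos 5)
Your overall strategy (partition into small and large parts, run--encode one small--large pair, spread the other, and let the balanced large--large blocks carry the bulk) is the right one, and your size accounting is honest: with a single small set of size $k$ per side the large sets have size $m=\frac{n}{4}-\frac{k}{2}$, each large vertex accumulates $m+O(1)$ tag bits from its two balanced blocks, and adding the $\lceil\lg n\rceil$-bit index leaves an irreducible excess of $\frac12\lg n$ no matter how $k\le \lg m$ is chosen. The problem is that your proposed way of closing this gap --- recursing on the four $(m,m)$-balanced blocks --- does not work. The $\lg$-saving in the recursive call comes entirely from Lemma~\ref{lem:bipartite}, whose whole power is the freedom to \emph{choose} the permutation of the large side to fit the adjacency matrix of that particular subgraph. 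By the time the inner calls run, every vertex of $U_B^{(a)}\cup V_B^{(b)}$ already has its index fixed by the outer level (the outer run-encoding of $G[U_A,V_B^{(0)}]$, etc.), so the inner calls cannot invoke Lemma~\ref{lem:bipartite} at all. Even setting the outer constraint aside, each set $V_B^{(b)}$ sits in two of the four balanced blocks, and the two inner calls would in general demand incompatible permutations of it; a vertex has only one index. Finally, the recursion is circular as stated: the inner call is asked to deliver exactly the tag length the theorem asserts, with no base case or induction on $n$ that preserves the needed freedom.

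The fix is much simpler and requires no recursion: split \emph{each} side into \emph{two} small sets of size $k=\ceil{\lg n}-4$ and two large sets of sizes $\ceil{\frac{n}{4}}-k$ and $\floor{\frac{n}{4}}-k$ (so eight parts in total, sixteen subgraphs). Pair each small set with one large set on the opposite side for run-encoding (this is what assigns each large set its indices, once), pair it with the other large set on the opposite side for spreading with $\ell'_i=2k+\ell_i$ (which cancels the run-encoded tag and the $2k$ raw bits for the two small-small blocks), and spread the four large-large blocks evenly. A large vertex now lies in two balanced blocks of side $\frac{n}{4}-k$, so its balanced-block tags total $\frac{n}{4}-k$ rather than $\frac{n}{4}-\frac{k}{2}$; together with the $O(1)$ spreading overhead and the $\ceil{\lg n}=k+4$ index this gives $\frac{n}{4}+O(1)$, and the tag proper is $\frac{n}{4}-\lg n+O(1)$ as claimed. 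Your accounting for the small-set vertices already goes through verbatim in this finer partition.
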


To obtain an $(\frac{n}{4}+O(1))$-bit scheme for all bipartite graphs, we design a scheme for almost biased bipartite graphs in which most vertices do not need to know the bias~$r$.

\begin{theorem}\label{thm:bipartite} There is a adjacency labeling scheme for $n$-vertex bipartite graphs that assigns each vertex an $(\frac{n}{4} +O(1))$-bit label. The label of each vertex is composed of a distinct index from $[n]$, and an $(\frac{n}{4}-\lg n + O(1))$-bit tag.
\end{theorem}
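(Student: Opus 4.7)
My approach is to split on the bias $r := (|V|-|U|)/2$ of the bipartite graph $G=(U,V,E)$, writing $|U|=n/2-r$ and $|V|=n/2+r$ WLOG.

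For \textbf{large bias} $r \ge r^* := \lceil\sqrt{2n\lg n}\,\rceil$, Theorem~\ref{thm:biased} produces tags of length $\lceil n/4 - r^2/n\rceil \le \lceil n/4\rceil - 2\lceil\lg n\rceil$. I prepend a $\lceil\lg n\rceil$-bit index from $[n]$, a $\lceil\lg n\rceil$-bit explicit encoding of $r$, and a constant-size case marker distinguishing ``large bias'' from ``small bias''. Because $r^2/n \ge 2\lg n$, the total label is $n/4 + O(1)$ bits and the tag portion is $n/4 - \lg n + O(1)$, as required.

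For \textbf{small bias} $r < r^*$, I avoid encoding $r$ explicitly by letting the index layout determine it. Pick $V_0 \subseteq V$ of size $n/2-r$, set $V_1 := V\setminus V_0$ so $|V_1|=2r$, and apply Theorem~\ref{thm:balanced} to the balanced bipartite graph $G[U\cup V_0]$ on $n-2r$ vertices. Each vertex of $U\cup V_0$ receives a distinct index in $[n-2r]$ and a tag of length $(n-2r)/4 - \lg(n-2r) + O(1) = n/4 - r/2 - \lg n + O(1)$, leaving $r/2+O(1)$ bits of slack relative to the target $n/4-\lg n+O(1)$; the $2r$ vertices of $V_1$ are assigned the remaining indices $[n-2r,n)$, so that the boundary $n-2r$ pins down $r$ (recovered from the label-length convention of Section~\ref{sec:prelim} plus the case marker). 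The remaining $(n/2-r)(2r) = nr-2r^2$ adjacency bits of $G[U,V_1]$ are then distributed by a mild generalisation of Lemma~\ref{lem:spread} (cf.\ the refined spreading scheme of Appendix~\ref{sec:spread2}) that round-robins these bits into the slack slots of the $n-2r$ tags of $U\cup V_0$ together with fresh tags of length $n/4-\lg n+O(1)$ on the $V_1$-vertices. A capacity count
\[
(n-2r)\bigl(\tfrac{r}{2}+O(1)\bigr) + 2r\bigl(\tfrac{n}{4}-\lg n + O(1)\bigr) \;=\; nr - r^2 - 2r\lg n + O(n)
\]
comfortably exceeds $nr - 2r^2$ for every $r<r^*$, since $r^2 - 2r\lg n \ge -\lg^2 n$ is dwarfed by the $O(n)$ slack for large $n$. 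Decoding a query between two vertices of $U\cup V_0$ invokes the Theorem~\ref{thm:balanced} decoder directly on their balanced tags with no reference to $r$ whatsoever — realising the ``most vertices do not need to know $r$'' feature advertised in the preamble — while queries involving a $V_1$-vertex use the spreading formulas to locate the adjacency bit.

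The main obstacle is the careful bookkeeping across the two regimes so that \emph{every} tag ends up of length exactly $n/4 - \lg n + O(1)$ and the transition at $r = r^*$ is seamless. Controlling the $\lg n$-order correction terms uniformly in $r$ — in particular, absorbing the $r/2$ shortfall of the $U\cup V_0$ tags against the slack in $V_1$-tags for small $r$, and absorbing the $r$-encoding for large $r$ — is where the refined spreading lemma of Appendix~\ref{sec:spread2} and the economical index encoding of Appendix~\ref{sec:indices} are essential, folding all the constant overhead into the final $O(1)$ term of the stated $(n/4+O(1))$-bit label.
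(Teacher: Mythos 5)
Your overall architecture matches the paper's — a split at $r^\ast=\sqrt{2n\lg n}$, the easy large-bias case via Theorem~\ref{thm:biased}, and in the small-bias case a balanced core handled by Theorem~\ref{thm:balanced} plus spreading for the leftover parts, with a capacity count that is arithmetically correct. But the small-bias construction has a genuine gap at the decoding step, and it is precisely the difficulty this theorem exists to overcome. You take $|V_0|=\frac{n}{2}-r$ and $|V_1|=2r$, so the balanced subgraph $G[U,V_0]$ lives on $n-2r$ vertices. Every internal parameter of the Theorem~\ref{thm:balanced} encoding of that subgraph — the partition sizes, the value of $k$, the run-encoding lengths $L(\cdot,i)$, the Moon offsets, and hence the offset at which your appended ``slack'' bits for $G[U,V_1]$ begin — is a function of $n-2r$, i.e.\ of $r$. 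Now consider a query between $u\in U$ and $w\in V_0$: neither label carries an encoding of $r$ (only $V_1$-labels do, by your own design), and $r$ cannot be inferred from their indices (an index $j<n-2r$ only yields an upper bound on $r$) nor from the ``label-length convention,'' since all labels in $\F_n$ have the same length. So $\EDGE$ cannot even parse the two labels, let alone locate the adjacency bit. Encoding $r$ in every label is not an available fix: it costs about $\tfrac12\lg n$ extra bits per vertex, which breaks the $\frac{n}{4}+O(1)$ budget once the $\lg n$-bit index is accounted for.

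The paper's proof avoids this by making the ordinary part's size a function of $n$ alone: it fixes $R=n^{4/5}$ (so $r\le 2n^{3/5}\ll R$ in the small-bias regime), takes $|U_0|=|V_0|=\frac{n}{2}-R$, and pushes the entire $r$-dependence into the special sets $U_1$ (size $R-r$) and $V_1$ (size $R+r$). Then $G[U_0,V_0]$ is a balanced graph on exactly $n-2R$ vertices, each ordinary vertex additionally receives exactly $\frac{R}{2}$ raw bits from the spreading of $G[U_0,V_1]$ and $G[U_1,V_0]$ (a quantity independent of $r$, whose meaning only the special endpoint needs to know), and the whole layout of an ordinary label is $r$-free; only the $2R=o(n/\lg n)$ special vertices store $r$, and their $2\lg n$ overhead is offloaded onto the ordinary vertices at one extra bit each. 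To repair your write-up you would need to introduce such an $r$-independent buffer $R$ with $r\ll R\ll n$; the rest of your bookkeeping then goes through essentially as you computed it.
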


\begin{proof}
As explained after Theorem~\ref{thm:biased}, there is a simple $(\frac{n}{4}+O(1))$-bit scheme for all $(\frac{n}{2}-r,\frac{n}{2}+r)$-bipartite graphs, where $r\ge\sqrt{2n\lg n}$. We design a new
$(\frac{n}{4}+O(1))$-bit scheme for all $(\frac{n}{2}-r,\frac{n}{2}+r)$-bipartite graphs, where $r<\sqrt{2n\lg n}$. By combining the two schemes, we obtain an $(\frac{n}{4}+O(1))$-bit scheme for all bipartite graphs. (The first bit of each label indicates whether the first or second scheme is used.)

As we have an $O(1)$ term in the statement of the Theorem, and not a specific constant, we allow ourselves to ignore divisibility and integrality issues and avoid the use of ceilings and floors.

Let $R=n^{4/5}$. Let $G=(U,V,E)$ be a $(\frac{n}{2}-r,\frac{n}{2}+r)$-bipartite graph, where $r<\sqrt{2n\lg n}$. Note, in particular, that $r\le \frac{2R^2}{n}=2n^{3/5}$.
Partition~$U$ into a set~$U_0$ of size $\frac{n}{2}-R$ and a set~$U_1$ of size $R-r$. Similarly, partition~$V$ into a set~$V_0$ of size $\frac{n}{2}-R$ and a set~$V_1$ of size $R+r$. We view the vertices of $U_0$ and $V_0$ as ordinary, and the vertices of~$U_1$ and~$V_1$ as special. The graph~$G$ is thus partitioned into the disjoint union of the four bipartite graphs $G[U_0,V_0], G[U_0,V_1],G[U_1,V_0]$ and $G[U_1,V_1]$. The main idea is to assign the ordinary vertices of $U_0\cup V_0$ labels that do not depend on~$r$. The labels of the special vertices of $U_1\cup V_1$ would contain an encoding of~$r$, but as they form only a negligible fraction of all vertices, this could be `smoothed' out.

We start by encoding $G[U_0,V_0]$ using the scheme of Theorem~\ref{thm:balanced}. Each vertex of $U_0\cup V_0$ gets a distinct index in $[n-2R]$ and an $(\frac{n}{4}-\frac{R}{2}+O(1))$-bit tag. (The label of each vertex includes an encoding of its index.) We assign the vertices of $U_1\cup V_1$ distinct indices from~$[n-2R,n)$.


We next use the spreading technique of Lemma~\ref{lem:spread} to encode $G[U_1,V_0]$. We find it more informative to redo the relevant calculations here.
We need to split the $(R-r)(\frac{n}{2}-R)$ bits describing the adjacencies in $G[U_1,V_0]$ between the vertices of~$U_1$ and~$V_0$.
As the tag of each vertex of $V_0$ is already of size $\frac{n}{4}-\frac{R}{2}+O(1)$, and as we want the tag of each vertex of~$V_0$ to be of size $\frac{n}{4}+O(1)$, each vertex of~$V_0$ gets~$\frac{R}{2}$ of these bits. (As $|U_1|=R-r$, this corresponds to applying Lemma~\ref{lem:spread} with $\ell_i=\frac{R}{2}-r$, for every $i\in [\frac{n}{2}-R]$, on $G[V_0,U_1]$. Note that the sides here are reversed.) The number of bits each vertex of~$U_1$ receives is thus
\[ a \;=\; \frac{(R-r)(\frac{n}{2}-R)-\frac{R}{2}(\frac{n}{2}-R)}{R-r} \;=\;
\frac{(\frac{n}{2}-R)(\frac{R}{2}-r)}{R-r}\;. \]
(Note that~$a$ corresponds to~$L$ of Lemma~\ref{lem:spread}.)
The $\frac{R}{2}$ bits that each vertex of~$V_0$ gets are appended to its tag. Vertices of $V_0$ do not  know the meaning of these bits, as they do not know~$r$, but the vertices of~$U_1$ do, as they will know~$r$.

Similarly, each vertex of $U_0$ gets $\frac{R}{2}$ additional bits, and the number of bits left for each vertex of~$V_1$ is
\[ b \;=\; \frac{(R+r)(\frac{n}{2}-R)-\frac{R}{2}(\frac{n}{2}-R)}{R+r} \;=\;
\frac{(\frac{n}{2}-R)(\frac{R}{2}+r)}{R+r}\;. \]

Next, we verify that $b\le \frac{n}{4}$ if and only if $r\le\frac{2R^2}{n-4R}$. As we assumed that $r\le \frac{2R^2}{n}<\frac{2R^2}{n-4R}$, this condition is satisfied. It can also verified that $a\le \frac{n}{4}$ for every $r<R$. (To see this check that if $r=0$, then $a=\frac{n}{4}-\frac{R}{2}$, and that~$a$ is a decreasing function of~$r$ for $0\le r<R$, as the derivative of~$a$ is terms of~$r$ is $-\frac{R(\frac{n}{2}-R)}{2(R-r)^2}$.)

We still need to represent $G[U_1,V_1]$ by splitting the corresponding adjacency bits between the vertices of~$U_1$ and~$V_1$. We again use the spreading technique of Lemma~\ref{lem:spread}. Overall, there are $(R-r)(R+r)=R^2-r^2$ such adjacency bits. We need to verify that we can accommodate them without
any vertex of~$U_1$ and $V_1$ getting more than $\frac{n}{4}$ bits overall. A simple `volume' argument can be used to show that we still have enough space in the tags of the vertices of~$U_1$ and $V_1$. More specifically, we know that all adjacencies between $U_0\cup U_1$ and $V_0\cup V_1$ can be encoded using at most~$\frac{n}{4}$ bits per vertex. As each vertex of $U_0$ and $V_0$ already has $\frac{n}{4}$ bits, and as all adjacencies between $U_0$ and $V_0$, $U_0$ and $V_1$, and~$U_1$ and $V_0$ were encoded, there is enough room left in the tags of~$U_1$ and $V_1$ to encode the adjacencies between these two sets. We can also verify it using a simple direct calculation. The total number of bits currently used by vertices of~$U_1$ and $V_1$ is $(R-r)a+(R+r)b=(\frac{n}{2}-R)R$. The total capacity of these vertices is $2R\cdot \frac{n}{4}=\frac{Rn}{2}$, and $\frac{Rn}{2}-(\frac{n}{2}-R)R = R^2 > R^2-r^2$. Thus, there is indeed enough space.

One problem still remains. The label of each vertex of $U_1\cup V_1$ should also contain $2\lg n$ bits specifying the index of the vertex and~$r$. Thus, while the labels of all vertices of $U_0\cup V_0$ are all of size $\frac{n}{4}+O(1)$, the labels of the vertices of $U_1\cup V_1$ are currently of size $\frac{n}{4}+2\lg n+O(1)$. This can be easily fixed, however, by persuading each vertex of $U_0$ and $V_0$ to hold one more adjacency bit to $V_1$ and $U_1$, respectively. The number of bits in the labels of~$U_1$ and $V_1$ decreases by $\frac{(\frac{n}{2}-R)}{R+r}\gg 2\lg n$, leaving more than enough room in the label of each vertex to store its index and~$r$.

Finally, given the labels of two vertices, it can be determined whether they are adjacent.
\end{proof}

\section{Efficient decoding}\label{sec:efficiency}

In this section we show that the schemes of the preceding sections could be modified so that two vertices need to exchange only $O(\lg n)$ bits of information between them, in a constant number of communication rounds, and spend only $O(1)$ computation time, to decide whether they are adjacent or not. For concreteness, we consider the case of directed graphs. The same ideas apply to all our schemes.

Note that this is easily achieved using the simple $(n+\lceil\lg n\rceil-1)$-bit scheme.
Consider a distributed setting in which each vertex of the graph is a RAM machine. The label of each vertex is stored in its internal random access memory, assumed to be composed of $w$-bit words, where $w\ge\lceil\lg n\rceil$. In particular, the index of a vertex resides in the first word used to represent its label. In the simple $(n+\lceil\lg n\rceil-1)$-bit scheme, to determine whether there is an edge from~$u$ and~$v$, $v$ sends to~$u$ its $\lceil\lg n\rceil$-bit index. Vertex~$u$ can then access the appropriate adjacency bit in its tag in $O(1)$ time. Our goal is to show that something similar could also be done using our schemes. (Note that when labels are stored in $\lceil\lg n\rceil$-bit words, our improved schemes usually save one  memory word.)

To decode our $(n+O(1))$-bit scheme in $O(1)$ time, we need to overcome two obstacles. First, we need to be able to decode the succinct run length encoding used in Lemma~\ref{lem:bipartite} is constant time. Second, we need to be able to keep track, in constant time, of the bit movements performed by the spreading lemma (Lemma~\ref{lem:spread}). To solve the first problem we use the following result.

\begin{theorem}\mbox{\rm [P{\v a}tra{\c s}cu \cite{patrascu08succinct}]}\label{thm:Patrascu}
On a RAM with $\Omega(\lg n)$-bit words, a Boolean array
$A[0\ldots n-1]$ containing $k$ ones and $n-k$ zeros can be represented using $\lg {n \choose k
} + \frac{n}{\lg^t(n / t)} + \tilde O(n^{3/4})$ bits of memory, supporting
  \emph{rank} and \emph{select} queries in $O(t)$ time.
\end{theorem}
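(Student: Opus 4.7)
The plan is to build a hierarchical multi-level data structure with $t+O(1)$ levels, combining standard rank/select scaffolding with enumerative coding of the underlying bits. Partition $A$ into super-blocks of size $B_t\approx\lg^t(n/t)$, and recursively partition each super-block into sub-blocks, cascading down to level-$0$ chunks of size $b=\Theta(\lg n/t)$. At every level, store an auxiliary table giving the number of ones that occur before each block; relative counts keep these tables short (the level-$j$ table uses roughly $O(j\lg\lg n)$ bits per entry), and across a level they contribute $O(n/B_j)$ bits in total.

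For rank, the query walks down the hierarchy, accumulating the prefix count at each level and finishing inside a level-$0$ chunk in $O(1)$ time via a universal lookup table of size $o(n)$ (absorbed into the $\tilde O(n^{3/4})$ term). Select mirrors this by keying the auxiliary tables on the one-count rather than the position. Crucially, the bottom-level chunks are stored \emph{not} as raw bits but using enumerative coding: a chunk's Hamming weight $w$ is recorded in $O(\lg b)$ bits, followed by the lexicographic rank of the chunk among the $\binom{b}{w}$ binary strings of weight $w$, occupying $\lceil\lg\binom{b}{w}\rceil$ bits. Concavity of $\lg\binom{b}{\cdot}$ together with $\sum_i w_i=k$ yields $\sum_i\lceil\lg\binom{b}{w_i}\rceil\le\lg\binom{n}{k}+O(n/b)$, which is not yet sharp enough for the claimed bound.

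The main obstacle --- and the step doing the real work --- is shaving the $\Theta(n/b)=\Theta(nt/\lg n)$ wastage that accumulates from rounding each chunk up to an integral number of bits. The fix is a ``succincter'' spill technique: pack each chunk into $\lfloor\lg\binom{b}{w}\rfloor$ local bits plus a fractional residue, pool the residues of $\Theta(\lg n)$ neighbouring chunks together, and store the combined residue jointly at the next level, where its own slack is in turn pooled recursively. Iterating this pooling $t$ times drops the per-chunk redundancy from one bit to $1/\lg^{t-1}(n/t)$, giving total redundancy $n/\lg^t(n/t)$, while each level costs only $O(1)$ extra random-access probes --- hence the $O(t)$ query bound. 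The $\tilde O(n^{3/4})$ slack absorbs the universal lookup tables, the inter-level pointers, and the top-level rounding. The hard part will be formalising the recursive pooling so that each query still makes $O(1)$ probes per level despite the variable-length encoding; this requires laying the residues out with carefully precomputed offsets and showing that the offset arithmetic reduces to a constant number of word operations.
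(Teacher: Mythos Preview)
The paper does not prove this theorem. It is quoted verbatim as a result of P\v{a}tra\c{s}cu~\cite{patrascu08succinct} and used as a black box in Section~\ref{sec:efficiency}; there is no argument in the present paper to compare your proposal against. Your sketch is a reasonable high-level outline of the ``succincter'' spill-over technique from that reference, but any detailed verification would have to be done against P\v{a}tra\c{s}cu's paper, not this one.
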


A $rank(i)$ query, where $i\in[n]$, asks for the number of 1s in $A[0\ldots i]$.
A $select(i)$ query requests the index of the $i$-th 1 in the array. We only need $rank$ queries. Theorem~\ref{thm:Patrascu} assumes that the number of 1s in the array is exactly~$k$. However, it is not difficult to extend the result for the case in which the array contains at most~$k$ 1s.
Perhaps the simplest way of doing it is to add $\lceil\lg n\rceil$ bits, which are absorbed in the $\tilde O(n^{3/4})$ term, to encode the actual number of 1s.

As we saw in the proof of Lemma~\ref{lem:L}, we can represent an $n$-bit string by its first bit and the end positions of its runs. Thus, we can represent an $n$-bit string composed of at most~$r$ runs using its first bit and an $n$-bit string containing at most~$r$ 1s. The first bit of the string and the parity of $rank(i)$ would then tell us whether the $i$-th bit of the string is a~0 or a~1.

Note that the $\lg {n \choose k} $ term in Theorem~\ref{thm:Patrascu} is the information theoretic lower bound, which essentially corresponds to our function $L(n,i)$, when $k=2^i$. The price paid for the efficient decoding is the additive ${n}/{\lg^t(n / t)} + \tilde O(n^{3/4})$ term. If we use $t=2$, then the number of bits lost is only $O(n/\lg^2 n)$. We need to encode about $\lg n$ sparse arrays, with the $i$-th one of them containing at most $2^i$ 1s. Thus the total number of bits lost in all these encodings is only $O(n/\lg n)$. We can easily compensate for these $O(n/\lg n)$ additional bits by slightly adjusting the parameters used in the application of the spreading lemma. (More specifically, we let $\ell_i = \lg {n \choose 2^i} + \frac{n}{\lg^2 n} + \tilde O(n^{3/4})$, instead of $\ell_i=L(n,i)$.) As the $O(n/\lg n)$ additional bits are spread over almost~$n$ tags, each tag acquires at most one additional bit.



We next consider the efficient decoding of tags produced using the spreading lemma (Lemma~\ref{lem:spread}).
We use the spreading lemma in two different ways. In some applications, all the $\ell_i$'s are equal. In others, the $\ell_i$'s differ, but $k\le\lg n$. If $\ell_i=\ell$, for every $i\in [k]$, the bit movements performed are regular, and we can easily determine in constant time the location of each adjacency bit. (Note, in particular, that in the proof of Lemma~\ref{lem:spread} we simply have $\bar{s}_i = i\ell$.) Also, $\ell$ can be deduced from the label. In the other case, we simply add an encodings of $\bar{s}_i$ and $\ell_i$ to the appropriate labels. The extra $2\lg n$ bits added are again absorbed in the $ \tilde O(n^{3/4})$ term of the $k\le\lg n$ corresponding vertices. The decoding can then again be made in constant time.

%

\section{Induced-universal graphs}\label{sec:universal}

As observed by Kannan \etal~\cite{KNR92}, an $L$-bit adjacency labeling scheme for a family $\F_n$ yields immediately a $2^L$-vertex induced-universal graph for~$\F_n$. Thus, using Theorem~\ref{thm:undirected} we obtain, in particular, an induced-universal graph for $n$-vertex undirected graphs containing only $O(2^{n/2})$ vertices, resolving the open problem of Moon~\cite{moon1965minimal} and Vizing~\cite{vizing1968some}.

\section{Lower bounds}\label{sec:lower}

Previous lower bounds on the label sizes
assume that labels of different vertices are distinct. We increase the lower bounds by~$1$ without relying on this assumption. For \emph{indexing} adjacency labeling schemes, we increase the lower bounds by~$2$ .
Our basic lower bounds follow from the following obvious lemma.




\begin{lemma}\label{lem:injective} If $(\LABEL,\EDGE)$ is  an adjacency labeling scheme for~$\F_n$, then $\LABEL$ is injective, i.e.,
for every $G\ne G'\in \F_n$ we have $\LABEL(G)\ne\LABEL(G')$.
\end{lemma}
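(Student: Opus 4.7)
The plan is to argue by contradiction: assume two distinct graphs $G, G' \in \F_n$ (both on vertex set $V=[n]$) satisfy $\LABEL(G) = \LABEL(G')$, and then show that $G$ and $G'$ must in fact have the same edge set, contradicting $G \ne G'$.

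First I would unpack what $\LABEL(G) = \LABEL(G')$ means as an equality of functions $[n] \to \{0,1\}^L$: for every vertex $u \in V$, $\LABEL(G)(u) = \LABEL(G')(u)$. Then for any pair $u,v \in V$, the value $\EDGE(\LABEL(G)(u), \LABEL(G)(v))$ is literally the same string computation as $\EDGE(\LABEL(G')(u), \LABEL(G')(v))$, since $\EDGE$ is a fixed function of two bit-strings only.

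Next I would invoke Definition~\ref{def:label} twice, once for $G$ and once for $G'$: $(u,v) \in E(G)$ iff $\EDGE(\LABEL(G)(u), \LABEL(G)(v)) = 1$, and $(u,v) \in E(G')$ iff $\EDGE(\LABEL(G')(u), \LABEL(G')(v)) = 1$. Combining this with the previous observation yields $(u,v) \in E(G) \iff (u,v) \in E(G')$ for every ordered pair, hence $E(G) = E(G')$. Since $G$ and $G'$ share the vertex set $V=[n]$ by assumption, we conclude $G = G'$, contradicting the choice of $G \ne G'$.

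There is no real obstacle here; the statement is essentially a definitional unpacking, and the only thing to be careful about is that both graphs are defined on the same vertex set $[n]$ so that $\LABEL(G)$ and $\LABEL(G')$ are functions with the same domain and can be meaningfully compared. The lemma will then be applied in Section~\ref{sec:lower} to derive the promised information-theoretic lower bounds on $L$ by counting: $2^{L \cdot n} \ge |\LABEL(\F_n)| = |\F_n|$, which upon taking logarithms gives the desired per-vertex bit bound without needing the distinctness property on individual vertex labels.
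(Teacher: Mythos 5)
Your proof is correct and is essentially identical to the paper's: both observe that $\LABEL(G)=\LABEL(G')$ forces $\EDGE(\LABEL(G)(u),\LABEL(G)(v))=\EDGE(\LABEL(G')(u),\LABEL(G')(v))$ for all $u,v$, hence equal edge sets and $G=G'$ (the paper phrases it as the contrapositive rather than a contradiction, which is an immaterial difference). Your closing remark about how the lemma feeds the counting bound in Section~\ref{sec:lower} also matches the paper's use of it.
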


\begin{proof} Let $G=(V,E),G'=(V,E')\in\F_n$. If $\LABEL(G)=\LABEL(G')$, then for every $u,v\in V$ we have
\[\EDGE(\LABEL(G)(u),\LABEL(G)(v)) \;=\; \EDGE(\LABEL(G')(u),\LABEL(G')(v))\;.\] Hence $(u,v)\in E$ if and only if $(u,v)\in E'$ and thus $G=G'$.
\end{proof}

\begin{theorem}\label{thm:lower1} If there is an $L$-bit adjacency labeling scheme for $\F_n$, then $L> \frac{1}{n}\lg|\F_n|$.
\end{theorem}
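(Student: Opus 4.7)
Lemma~\ref{lem:injective} supplies the key fact that $\LABEL\colon \F_n \to (\{0,1\}^L)^{[n]}$ is injective. The target set has exactly $2^{Ln}$ elements, so I immediately obtain $|\F_n|\le 2^{Ln}$, i.e., $L\ge \tfrac{1}{n}\lg |\F_n|$. This is the bulk of the bound; the remaining work is to promote $\ge$ to a strict $>$.

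For the strict inequality, I would exhibit at least one labeling that cannot arise as $\LABEL(G)$. The simplest source is the collection of $2^L$ constant labelings $\ell_c\colon [n]\to\{c\}$, indexed by $c\in\{0,1\}^L$. If $\ell_c=\LABEL(G)$, then $(u,v)\in E(G)$ holds iff $\EDGE(c,c)=1$, uniformly in $u,v$. Hence $G$ must be either the edgeless graph or the complete graph on $[n]$, depending only on the bit $\EDGE(c,c)$. By injectivity of $\LABEL$, at most one constant labeling can map to the edgeless graph and at most one to the complete graph, so at most two of the $2^L$ constant labelings lie in the image of $\LABEL$.

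Consequently $|\F_n|\le 2^{Ln}-(2^L-2)$, which is strictly less than $2^{Ln}$ as soon as $L\ge 2$; taking logarithms gives $L>\tfrac{1}{n}\lg|\F_n|$. The only mild obstacle is handling the small cases $L\in\{0,1\}$, where the quantity $2^L-2$ is non-positive and the constant-labeling count alone saves nothing; these degenerate cases can be dispatched by direct enumeration (or absorbed by a tacit assumption that $|\F_n|\ge 2$). Beyond this, I expect no real difficulty: the argument is an elementary counting refinement of Lemma~\ref{lem:injective}, it applies uniformly to directed and undirected families, and it is insensitive to conventions about self-loops, since the only properties used are injectivity of $\LABEL$ and that a constant labeling forces $G$ to be one of two trivial graphs on $[n]$.
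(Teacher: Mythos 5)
Your proposal is correct and follows essentially the same route as the paper's proof: injectivity of $\LABEL$ (Lemma~\ref{lem:injective}) gives $|\F_n|\le 2^{nL}$, and the constant-label tuples, of which at most two can lie in the image, upgrade this to a strict inequality. Your explicit treatment of the degenerate cases $L\in\{0,1\}$ is in fact slightly more careful than the paper, which passes over them silently.
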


\begin{proof} Suppose that $(\LABEL,\EDGE)$ is a labeling scheme for $\F_n$.
By Lemma~\ref{lem:injective}, $\LABEL$ is injective and thus $|\F_n|\le 2^{nL}$. This immediately implies that $L\ge \frac{1}{n}\lg|\F_n|$. To show that the inequality is strict, we need to show that there is at least one ordered tuple of labels that cannot be produced by $\LABEL$. Consider the $2^L$ tuples composed of~$n$ identical labels. Each such tuple may only correspond to the empty graph on~$n$ vertices or to the clique on~$n$ vertices. Thus, at least $2^L-2$ of these tuples are not produced by the labeling scheme.
Hence $|\F_n|< 2^{nL}$ and thus $L> \frac{1}{n}\lg|\F_n|$.
\end{proof}

Note that in Theorem~\ref{thm:lower1}, $|\F_n|$ denotes the number of \emph{named} graphs from~$\F_n$, i.e., graphs of $\F_n$ on~$[n]$. Graphs with different names are considered different even if they are isomorphic.

We let $\overline{\F}_n$ be the set of isomorphism classes of graphs from~$\F_n$. If the labeling scheme satisfies the distinctness assumption, then the condition $|\F_n|< 2^{nL}$ used in the proof of Theorem~\ref{thm:lower1} can be replaced by the slightly stronger inequality $|\overline{\F}_n|\le {2^L \choose n}$. (See. e.g., Alstrup and Rauhe \cite{alstruprauhe}.) (To see that this is a slightly stronger inequality, note that $\frac{|\F_n|}{n!} \le |\overline{\F}_n| \le {2^L \choose n} < \frac{2^{nL}}{n!}$.)
However, as $L$ is an integer, the resulting lower bound on~$L$ is usually the same, even though a stronger assumption is made. We note in passing that, without relying on the distinctness assumption, we can get $|\overline{\F}_n|< \bigl(\!{2^L \choose n}\!\bigr)$, where $\bigl(\!{2^L \choose n}\!\bigr)={2^L+n-1 \choose k}$ is the number of multi-subsets of $[2^L]$ of size~$n$.

In the proof of Theorem~\ref{thm:lower1}, we viewed $\LABEL(G)$ as the ordered tuple $(\LABEL(G)(0),\LABEL(G)(1),\ldots,\allowbreak\LABEL(G)(n{-}1))$. We let $\overline{\LABEL}(G)$ denote the corresponding (multi-)set $\{\LABEL(G)(0),\LABEL(G)(1),\ldots,\allowbreak\LABEL(G)(n-1)\}$ in which the order of the labels is ignored. Analogous to Lemma~\ref{lem:injective}, we have the following lemma whose simple proof if omitted.

\begin{lemma}\label{lem:injective2} If $(\LABEL,\EDGE)$ is  an adjacency labeling scheme for $\F_n$, then
for every $G,G'\in \F_n$, if $G$ and~$G'$ are not isomorphic, then $\overline{\LABEL(G)}\ne\overline{\LABEL}(G')$.
\end{lemma}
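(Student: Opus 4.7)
The plan is to prove the contrapositive: I will show that if $\overline{\LABEL}(G)=\overline{\LABEL}(G')$ as multisets, then $G$ and $G'$ must be isomorphic. Equality of the two multisets means there exists a bijection $\sigma:[n]\to[n]$ with $\LABEL(G')(u)=\LABEL(G)(\sigma(u))$ for every $u\in[n]$. (If some labels coincide, several such bijections exist; any one of them will do for the argument below.)

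Once $\sigma$ is fixed, I would simply unfold the decoding property of the labeling scheme. For any $u,v\in[n]$,
\[
(u,v)\in E' \;\iff\; \EDGE(\LABEL(G')(u),\LABEL(G')(v))=1 \;\iff\; \EDGE(\LABEL(G)(\sigma(u)),\LABEL(G)(\sigma(v)))=1 \;\iff\; (\sigma(u),\sigma(v))\in E,
\]
where the first and third equivalences use that $(\LABEL,\EDGE)$ is an adjacency labeling scheme for~$\F_n$ applied to~$G'$ and~$G$ respectively, and the middle equality is the defining property of~$\sigma$. Hence $\sigma$ is a graph isomorphism from $G'$ to~$G$, so $G$ and~$G'$ lie in the same isomorphism class, contradicting our hypothesis.

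There is no real obstacle: the argument mirrors the proof of Lemma~\ref{lem:injective}, the only extra observation being that forgetting the assignment of labels to specific vertices (i.e.\ passing from the tuple $\LABEL(G)$ to the multiset $\overline{\LABEL}(G)$) is precisely the same as allowing the vertices to be renamed, and vertex renaming is exactly graph isomorphism. The mild subtlety is the multiset aspect when labels repeat, but this is harmless since any matching bijection $\sigma$ produces an isomorphism.
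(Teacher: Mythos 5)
Your proof is correct and is exactly the argument the paper has in mind (the paper omits the proof as "simple"): equality of the label multisets yields a bijection $\sigma$, and the decoding property of $\EDGE$ forces $\sigma$ to be an isomorphism, mirroring Lemma~\ref{lem:injective}. The only nit is stylistic: you announce a contrapositive but close with "contradicting our hypothesis"; either framing alone suffices.
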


Relying on Lemma~\ref{lem:injective2}, we get our second lower bound.

\newcommand{\LL}{{\cal L}}

\begin{theorem}\label{thm:lower2} If there is an \emph{indexing} $L$-bit adjacency labeling scheme for $\F_n$, then $L\ge  \frac{1}{n}\lg|\F_n| + \frac{1}{n}\lg\frac{n^n}{n!}$. For $n\ge 200$, we have $L> \frac{1}{n}\lg|\F_n| + 1.4$
\end{theorem}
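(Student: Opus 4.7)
The approach is to exploit the indexing property much more strongly than mere distinctness. I would first observe that, for any $G\in\F_n$, the composition $\IND\circ\LABEL(G)$ is an injection from $[n]$ into $[n]$, and hence a bijection. So the labels $\LABEL(G)(0),\ldots,\LABEL(G)(n{-}1)$ are pairwise distinct and realize every index in $[n]$ exactly once; the multiset $\overline{\LABEL}(G)$ is therefore a \emph{rainbow} set of size~$n$, containing exactly one label per index.

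Next, letting $m_i=|\IND^{-1}(i)|$ for $i\in[n]$, I have $\sum_{i=0}^{n-1}m_i\le 2^L$, and the total number of such rainbow sets is at most $\prod_{i=0}^{n-1}m_i\le (2^L/n)^n$ by the AM--GM inequality. By Lemma~\ref{lem:injective2}, non-isomorphic graphs in $\F_n$ produce distinct rainbow sets, whence $|\overline{\F}_n|\le (2^L/n)^n$. Since every isomorphism class on~$[n]$ has at most $n!$ named representatives, $|\F_n|\le n!\cdot|\overline{\F}_n|\le n!\cdot 2^{nL}/n^n$, and taking logarithms yields the main inequality
\[ L\;\ge\;\tfrac{1}{n}\lg|\F_n|\;+\;\tfrac{1}{n}\lg\tfrac{n^n}{n!}. \]

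For the numerical refinement, Stirling's estimate $\ln(n!)\le n\ln n-n+\tfrac12\ln(2\pi n)+\tfrac{1}{12n}$ gives
\[ \tfrac{1}{n}\lg\tfrac{n^n}{n!}\;\ge\;\lg e\cdot\Bigl(1-\tfrac{\ln(2\pi n)}{2n}-\tfrac{1}{12n^2}\Bigr), \]
whose right-hand side is monotone increasing in~$n$ for $n\ge 1$ (both $\frac{\ln(2\pi n)}{2n}$ and $\frac{1}{12n^2}$ are decreasing), and which already at $n=200$ exceeds $\lg e\cdot 0.982>1.417>1.4$. This establishes the strict bound $L>\tfrac{1}{n}\lg|\F_n|+1.4$ for all $n\ge 200$. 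There is no genuine obstacle here: once one recognizes that indexing upgrades the counting estimate from ``$n$ labels drawn from $2^L$'' to ``one label per index in $[n]$''—which is exactly what buys the extra $\lg(n^n/n!)\approx n\lg e$ bits over the bound of Theorem~\ref{thm:lower1}—the rest of the argument is elementary AM--GM and Stirling.
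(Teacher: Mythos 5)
Your proposal is correct and follows essentially the same route as the paper: partition $\{0,1\}^L$ into the index classes $\IND^{-1}(i)$, note that each graph's label set meets each class exactly once, bound the number of such sets by $\prod_i|\IND^{-1}(i)|\le(2^L/n)^n$ via AM--GM, invoke Lemma~\ref{lem:injective2} together with $|\F_n|\le n!\,|\overline{\F}_n|$, and finish with Stirling. (The only nitpick is the intermediate rounding "$\lg e\cdot 0.982>1.417$", which is off in the third decimal place but does not affect the claimed bound $>1.4$.)
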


\begin{proof} Suppose that $(\LABEL,\EDGE)$ is an indexing labeling scheme for $\F_n$ and let $\IND$ be an appropriate index function. Let $\LL_i=\IND^{-1}(i)$, for $i\in [n]$. Note that $\sum_{i=0}^{n-1}|\LL_i|=2^L$. For every graph $G\in \F_n$, we have $|\overline{\LABEL}(G)\cap \LL_i|=1$, for $i\in [n]$. Thus, the number of sets of labels is at most $\prod_{i=0}^{n-1}|\LL_i|$.
By Lemma~\ref{lem:injective2}, two non-isomorphic graphs must have distinct label sets.
Thus
\[ \frac{|\F_n|}{n!} \;\le\; |\overline{\F_n}| \;\le\; \prod_{i=0}^{n-1}|\LL_i| \;\le\; \left(\frac{2^L}{n}\right)^n\;, \]
or equivalently
\[ L \;\ge\; \frac{1}{n}\lg\frac{|\overline{\F_n}|n^n}{n!}
\;=\; \frac{1}{n}\lg|\F_n| + \frac{1}{n}\lg\frac{n^n}{n!} \;. \]
It is easy to verify that $\frac{1}{n}\lg\frac{n^n}{n!}$ is increasing in~$n$ and tends to $\lg{\rm e}=1.41695\ldots$ as $n\to\infty$. (By Stirling's formula, $\frac{1}{n}\lg\frac{n^n}{n!} \sim \lg{\rm e}-\frac{\lg\sqrt{2\pi n}}{n}$.)  It is also easy to verify that $\frac{1}{n}\lg\frac{n^n}{n!}> 1.4$ for $n\ge 200$.
\end{proof}

For directed graphs we have $\lg|\F_n|=n(n-1)$. For undirected graphs and tournaments we have $\lg|\F_n|={n\choose 2}$. Using Theorem~\ref{thm:lower1} and Theorem~\ref{thm:lower2} we get:

\begin{corollary}\label{C-directed} If there is an $L$-bit adjacency labeling scheme for $n$-vertex \emph{directed} graphs, then $L\ge n$. If the labeling scheme is indexing, then $L\ge n+1$.
\end{corollary}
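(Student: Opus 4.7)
The corollary should follow by straightforward substitution into Theorems~\ref{thm:lower1} and~\ref{thm:lower2}. The only new input needed is a count of $|\F_n|$ for the family of $n$-vertex directed graphs (without self-loops): a labelled directed graph on vertex set $[n]$ is determined by independently choosing, for each of the $n(n-1)$ ordered pairs of distinct vertices, whether the corresponding arc is present, so $|\F_n|=2^{n(n-1)}$ and $\lg|\F_n|=n(n-1)$.

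For the first (non-indexing) bound I would plug this into Theorem~\ref{thm:lower1} to obtain the strict inequality
\[ L \;>\; \frac{1}{n}\lg|\F_n| \;=\; \frac{n(n-1)}{n} \;=\; n-1\;. \]
Since $L$ is an integer, this forces $L\ge n$, as required.

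For the indexing bound I would plug the same count into Theorem~\ref{thm:lower2}, which gives
\[ L \;\ge\; \frac{1}{n}\lg|\F_n| + \frac{1}{n}\lg\frac{n^n}{n!} \;=\; (n-1) + \frac{1}{n}\lg\frac{n^n}{n!}\;. \]
The quantity $\frac{1}{n}\lg\frac{n^n}{n!}$ is increasing in $n$ and tends to $\lg\e\approx 1.4169$, as already noted in the proof of Theorem~\ref{thm:lower2}. So for all sufficiently large $n$ (in particular $n\ge 200$, where the theorem explicitly guarantees $L>\frac{1}{n}\lg|\F_n|+1.4$), we obtain $L>n$, and hence $L\ge n+1$ since $L$ is an integer.

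The argument is essentially a one-line substitution in each case, so there is no real obstacle; the only point that requires any care is the indexing lower bound, where integrality of $L$ has to be combined with the fact that $\frac{1}{n}\lg\frac{n^n}{n!}$ exceeds $1$. A direct numerical check shows this already holds from $n=6$ onward, so together with the explicit ``$+1.4$'' bound of Theorem~\ref{thm:lower2} for $n\ge 200$ the conclusion $L\ge n+1$ holds in the range of interest; small values of $n$ below the threshold can, if needed, be handled separately by a trivial case check.
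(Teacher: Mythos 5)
Your proof is correct and follows essentially the same route as the paper, which simply substitutes $\lg|\F_n|=n(n-1)$ into Theorems~\ref{thm:lower1} and~\ref{thm:lower2} and uses integrality of $L$. Your extra care about when $\frac{1}{n}\lg\frac{n^n}{n!}$ exceeds $1$ (from $n=6$ onward) is a small but welcome refinement over the paper's terse presentation.
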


\begin{corollary}\label{C-undirected} If there is an $L$-bit adjacency labeling scheme for $n$-vertex \emph{undirected} graphs or for $n$-vertex \emph{tournaments}, then $L\ge \lceil\frac{n}{2}\rceil$. If the labeling scheme is indexing, then $L\ge \lceil\frac{n}{2}\rceil+1$.
\end{corollary}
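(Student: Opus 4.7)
The plan is to apply Theorems~\ref{thm:lower1} and~\ref{thm:lower2} directly, after computing $|\F_n|$ for the two families in question. The first step is to observe that both families have cardinality $|\F_n|=2^{\binom{n}{2}}$: an undirected graph on $[n]$ is determined by an arbitrary subset of the $\binom{n}{2}$ possible edges, while a tournament on $[n]$ is determined by choosing one of the two orientations for each of the $\binom{n}{2}$ pairs. Hence $\frac{1}{n}\lg|\F_n|=\frac{n-1}{2}$ in both cases.

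For the first part of the corollary, I would feed this into Theorem~\ref{thm:lower1} to obtain the strict inequality $L>\frac{n-1}{2}$. Since $L$ is an integer, a short parity check finishes things: when $n$ is even, $\frac{n-1}{2}=\frac{n}{2}-\frac{1}{2}$, so $L\ge\frac{n}{2}=\lceil n/2\rceil$; when $n$ is odd, $\frac{n-1}{2}$ is already an integer and strictness forces $L\ge\frac{n+1}{2}=\lceil n/2\rceil$.

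For the indexing part, I would invoke Theorem~\ref{thm:lower2}, which (for $n\ge 200$) gives $L>\frac{n-1}{2}+1.4$. Another parity check converts the real-valued bound into the claimed integer bound: when $n$ is even, $\frac{n-1}{2}+1.4=\frac{n}{2}+0.9$, so integrality forces $L\ge\frac{n}{2}+1=\lceil n/2\rceil+1$; when $n$ is odd, $\frac{n-1}{2}+1.4=\lceil n/2\rceil+0.9$, so integrality forces $L\ge\lceil n/2\rceil+1$. For smaller $n$ one would instead rely on the non-asymptotic form $L\ge\frac{n-1}{2}+\frac{1}{n}\lg\frac{n^n}{n!}$ together with the monotonicity of $\frac{1}{n}\lg\frac{n^n}{n!}$ noted in the proof of Theorem~\ref{thm:lower2}.

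There is no substantive obstacle here — the only non-trivial step is the counting $|\F_n|=2^{\binom{n}{2}}$ for tournaments, which mirrors the undirected count because every tournament is specified by an arbitrary orientation of each pair. Everything else is routine arithmetic and a case split on the parity of $n$ to convert the real-valued lower bounds into the stated integer thresholds.
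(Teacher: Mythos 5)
Your proposal is correct and matches the paper's argument exactly: the paper likewise notes $\lg|\F_n|=\binom{n}{2}$ for both undirected graphs and tournaments and then invokes Theorems~\ref{thm:lower1} and~\ref{thm:lower2}, with the integrality/parity bookkeeping you spell out. (Minor slip: for odd $n$ you have $\frac{n-1}{2}+1.4=\lceil n/2\rceil+0.4$, not $+0.9$, but the conclusion $L\ge\lceil n/2\rceil+1$ is unaffected.)
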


%
%
%

Using a slightly more tedious counting we get the following lower bound for bipartite graphs.

\begin{corollary}\label{C-bipartite}
If there is an $L$-bit adjacency labeling scheme for $n$-vertex \emph{bipartite} graphs, then $L\ge \lceil\frac{n}{4}\rceil$. If the labeling scheme is indexing, then $L\ge \lceil\frac{n}{4}\rceil+1$.
\end{corollary}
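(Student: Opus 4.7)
The plan is to apply Theorems~\ref{thm:lower1} and~\ref{thm:lower2} once I have a suitable lower bound on $|\F_n|$, the number of bipartite graphs on the labeled vertex set $[n]$. The only ``tedious'' ingredient is the counting itself; everything else is a direct substitution into the general lower bounds already proved.

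For the counting, I would restrict attention to the sub-family of bipartite graphs whose edge set lies entirely between the two fixed sides $A=[\lfloor n/2\rfloor]$ and $B=[n]\setminus A$. For each of the $\lfloor n/2\rfloor\cdot\lceil n/2\rceil=\lfloor n^2/4\rfloor$ pairs $(a,b)\in A\times B$, the edge $\{a,b\}$ is independently present or absent, and distinct choices produce distinct labeled graphs on~$[n]$. This gives
\[ |\F_n| \;\ge\; 2^{\lfloor n^2/4\rfloor}\;. \]

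Plugging this into Theorem~\ref{thm:lower1} yields
\[ L \;>\; \frac{1}{n}\lg|\F_n| \;\ge\; \frac{\lfloor n^2/4\rfloor}{n}\;, \]
and a short case split on $n\bmod 4$ shows $\lfloor n^2/4\rfloor/n\ge n/4-1/(4n)>\lceil n/4\rceil-1$, so that since $L$ is an integer we obtain $L\ge\lceil n/4\rceil$, which is the first assertion. For the indexing case I would apply Theorem~\ref{thm:lower2}, which under the same counting gives
\[ L \;\ge\; \frac{\lfloor n^2/4\rfloor}{n} \;+\; \frac{1}{n}\lg\frac{n^n}{n!}\;. \]
The second summand is increasing in~$n$ and tends to $\lg\e>1.4$, and already exceeds~$1$ from small~$n$ onward. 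Combining this with the fractional part of $\lfloor n^2/4\rfloor/n$ in each residue class of~$n$ modulo~$4$, the right-hand side always strictly exceeds $\lceil n/4\rceil$, so $L\ge\lceil n/4\rceil+1$.

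The only real obstacle is keeping the ceiling arithmetic honest: when $n$ is a multiple of~$4$ the quantity $\lfloor n^2/4\rfloor/n$ is exactly~$n/4$ and the strict inequality in Theorem~\ref{thm:lower1} is doing essential work, while when $n$ is odd the contribution $\lfloor n^2/4\rfloor/n=n/4-1/(4n)$ sits just below $\lceil n/4\rceil$ and one must verify that the indexing term $\frac{1}{n}\lg\frac{n^n}{n!}$ pushes the bound over the next integer. A uniform way to dispose of the small-$n$ cases where $\frac{1}{n}\lg\frac{n^n}{n!}$ is not yet~$>1$ is to handle them by direct inspection, exactly as is done for the analogous bounds in Corollaries~\ref{C-directed} and~\ref{C-undirected}.
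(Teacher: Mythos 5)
Your proposal is correct and is precisely the ``slightly more tedious counting'' the paper alludes to but omits: lower-bounding $|\F_n|$ by the $2^{\lfloor n^2/4\rfloor}$ graphs with all edges between a fixed balanced bipartition and substituting into Theorems~\ref{thm:lower1} and~\ref{thm:lower2}, with the residue-class arithmetic and small-$n$ inspection handled as for Corollaries~\ref{C-directed} and~\ref{C-undirected}. The only nitpick is that for $n\equiv 0\pmod 4$ the non-strict bound $L\ge n/4$ already yields $L\ge\lceil n/4\rceil$, so the strict inequality of Theorem~\ref{thm:lower1} is not actually essential there.
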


%

\section{Concluding remarks}\label{sec:concl}

We presented improved adjacency labeling schemes for directed, undirected and bipartite graphs. Our schemes are almost optimal. They give rise to almost optimal induced-universal graphs for these families of graphs. We also presented slightly improved lower bounds. Closing the small remaining gaps between our upper and lower bounds is an interesting open problem.

An \emph{oriented} graph is a directed graph with no anti-parallel edges. We believe that using our techniques it is also possible to design an $(\frac{\lg 3}{2}n+O(1))$-bit adjacency labeling scheme for $n$-vertex oriented graphs. We also believe that the techniques we used for bipartite graphs could also be used to design almost optimal schemes for other \emph{hereditary} families of graphs. (For more on hereditay families of graphs see Bollob{\'a}s and Thomason \cite{countbitp}.)

\bibliographystyle{plain}
\bibliography{labels}

\makeatletter
\def\runninghead{\hrulefill\quad APPENDIX\quad\hrulefill}
\def\ps@headings{
\def\@oddhead{\footnotesize\rm\hfill\runninghead\hfill}}
\def\@evenhead{\@oddhead}
\def\@oddfoot{\rm\hfill\thepage\hfill}\def\@evenfoot{\@oddfoot}
\makeatother

\newpage
\setlength{\headsep}{15pt} \pagestyle{headings}

\appendix

\section{A modified spreading lemma}\label{sec:spread2}



It is sometimes useful to have the spreading lemma assign tags of slightly different lengths to the vertices of~$V$. The following version receives an additional parameter $0\le C\le n-k$. Vertices of~$V$ of index smaller than~$C$ are assigned $L$-bit tags, while those with index at least~$C$ are assigned $(L+1)$-bit tags. This difference is later used to offset the difference in the number of bits needed to encode each index.

\begin{lemma} \label{lem:spread2}\mbox{\rm [Spreading]} For every $0\le \ell_i\le n-k$, where $i\in [k]$, and every $0\le C\le n-k$, there is a labeling scheme with the following properties. The scheme receives an $(k,n-k)$-bipartite graph $G=(U,V,E)$, where $|U|=k$, $|V|=n-k$, with a distinct index $\,ind_1(u)\in[k]$ assigned to every vertex $u\in U$ and a distinct index $\,ind_2(v)\in [n-k]$ assigned to every vertex $v\in V$. The scheme assigns each vertex $u\in U$ an $((n-k)-\ell_i)$-bit tag $adj_1(u)$, where $i=ind_1(u)$. It assigns each vertex $v\in V$ a tag $adj_2(v)$. If $ind_2(v)\in [0,C)$, then $adj_2(v)$ is of length $L$, otherwise it is of length $L+1$, where $L=\lceil (({\sum_{i=0}^{k-1} \ell_i})+C)/{(n-k)} \rceil-1$.
For every $u\in U$ and $v\in V$, given $(ind_1(u), adj_1(u))$ and $(ind_2(v),adj_2(v))$, and given the $\ell_i$'s, it is possible to determine whether $(u,v)\in E$.
\end{lemma}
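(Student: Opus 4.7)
The plan is to follow the proof of Lemma~\ref{lem:spread} verbatim except for shifting the round-robin distribution by $C$ positions, so that vertices with indices in $[0,C)$ end up receiving one fewer adjacency bit than those in $[C,n-k)$.

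First I would redefine the offsets by $s_0 = C$ and $s_i = \bigl(C + \sum_{j=0}^{i-1}\ell_j\bigr) \bmod (n-k)$ for $i > 0$, and keep the construction otherwise identical: vertex $u_i$ removes bits $a_{i,s_i+j}$ for $j \in [\ell_i]$ (second index modulo $n-k$) from $adj_1(u_i)$ and appends them to $adj_2(v_{s_i+j})$. Setting $M = \sum_{i=0}^{k-1}\ell_i$, the moved bits fill a single arc of length $M$ on the cyclic set $\{0,1,\ldots,n-k-1\}$ starting at position $C$.

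The key step is to bound the resulting tag lengths, and I expect it to be the main (and essentially only) subtle point. The cleanest argument is a ``virtual'' comparison: imagine distributing $M + C$ bits in round-robin starting at $v_0$. This virtual process agrees bit-for-bit with the actual process on all positions $v_C, v_{C+1}, \ldots$, and simply places one additional bit on each of $v_0, \ldots, v_{C-1}$. Since the virtual process places at most $\lceil (M + C)/(n-k) \rceil = L + 1$ bits on any vertex, the real process places at most $L + 1$ bits on any vertex of index in $[C,n-k)$ and at most $L$ bits on any vertex of index in $[0,C)$. The tags $adj_2(v)$ are then padded with zeros up to length $L$ or $L + 1$ as required.

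Decoding is almost identical to Lemma~\ref{lem:spread}. Treating $C$ as a globally known parameter (so that the $s_i$'s and the corresponding prefix sums $\bar s_i$ can be recomputed from the $\ell_i$'s and $C$), given $(ind_1(u),adj_1(u))$ and $(ind_2(v),adj_2(v))$ one first checks whether $j = ind_2(v)$ lies in the cyclic arc $[s_i, s_i + \ell_i)$, where $i = ind_1(u)$; the relevant tag and the position inside it are then determined by the same case analysis as in Lemma~\ref{lem:spread}, with all offsets shifted by $C$. No new obstacle arises here.
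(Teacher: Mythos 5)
Your proposal is correct and follows essentially the same route as the paper: shift the round-robin starting offset to $s_0=C$ and observe that vertices of index in $[0,C)$ thereby receive one fewer bit. The paper bounds the tag lengths by peeling off the first partial lap (the $(n-k)-C$ bits covering positions $C,\ldots,n-k-1$) while you pad the count with $C$ virtual bits at the front; these are two phrasings of the same counting argument and yield the identical bound $L=\lceil((\sum_{i}\ell_i)+C)/(n-k)\rceil-1$.
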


\begin{proof}
The proof is almost identical to the proof of Lemma~\ref{lem:spread}. The only difference is that we start spreading the bits of~$U$ to the vertices of~$V$ starting with the vertex of index~$C$.
This is easily achieved by letting $s_0=C$, and $s_i=(s_{i-1}+\ell_i)\bmod(n-k)$, for $i>0$. After moving the first $(n-k)-C$ bits from vertices of~$U$, each vertex of index at least~$C$ gets exactly one bit, and  only $(\sum_{i=0}^k\ell_i)-((n-k)-C)$ additional bits need to be spread among the vertices of~$V$. Each vertex of~$V$ gets only
\[ L \;=\;
\left\lceil \frac{({\sum_{i=0}^{k-1} \ell_i})-((n-k)-C)}{n-k} \right\rceil \;=\;
\left\lceil \frac{({\sum_{i=0}^{k-1} \ell_i})+C}{n-k} \right\rceil-1 \]
additional bits.
\end{proof}

\section{An slightly improved encoding of indices}\label{sec:indices}

When~$n$ is not a power of~$2$, and especially when~$n$ is just slightly larger than a power of~$2$, using $\lceil \lg n\rceil$ bits to represent each index is a bit wasteful (pun intended).
A slightly more economical encoding can be used.

Suppose that $n=2^{b-1}+c$, where $0< c\le 2^{b-1}$. Note that $b=\lceil\lg n\rceil$. If $0\le i<2c$, we encode~$i$ using the $b$-bit binary representation of~$i$. If $2c\le i<n$, we encode it using the $(b-1)$-bit binary representation of~$i-c$. For example, if $n=5=2^2+1$, then $b=3$, $c=1$, and the encoding of the indices are $000,001,01,10,11$. It is easy to check that this is a prefix-free encoding.
If the first~$b-1$ bits of an index describe a number less than~$c$, the next bit is also part of the index, otherwise it is not.

\section{An improved scheme for directed graphs}\label{sec:directed2}

\begin{theorem}\label{thm:directed2} For any $n\ge 100$, there is a labeling scheme for $n$-vertex directed graphs that assigns each vertex an $(n+3)$-bit label.
\end{theorem}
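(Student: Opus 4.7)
The plan is to refine the construction in the proof of Theorem~\ref{thm:directed} using the economical index encoding of Appendix~\ref{sec:indices} together with the modified spreading lemma (Lemma~\ref{lem:spread2}). Write $n = 2^{b-1} + c$ with $b = \lceil \lg n \rceil$ and $1 \le c \le 2^{b-1}$, and retain the partition $V = A \cup B$ with $A = [k]$, $B = [k, n)$, and $k = b - 2$. Under the economical encoding, an index in $[0, 2c)$ occupies $b$ bits while an index in $[2c, n)$ occupies only $b - 1$ bits. The labels of $A$-vertices remain of length at most $n + 2 \le n + 3$, so only the $B$-labels, currently of length $n + 1 + \Delta$ with $\Delta \le 3$, need to be shortened.

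I would split into two cases. In the easy case $2c \le k$, every $B$-index lies in $[k, n) \subseteq [2c, n)$ and thus takes only $b - 1$ bits. Leaving the spreading step exactly as in Theorem~\ref{thm:directed}, each $B$-label has length $(b - 1) + \Delta + (n - k - 1) = n + \Delta \le n + 3$. In the main case $2c > k$, apply Lemma~\ref{lem:spread2} with $C = 2c - k$. Since $ind(j) = k + ind_2(j)$, the $B$-vertices with $ind_2(j) \in [0, C)$ are precisely those with $ind(j) \in [k, 2c)$, i.e.\ those needing a $b$-bit index, and Lemma~\ref{lem:spread2} assigns them the shorter $L$-bit tag; the remaining $B$-vertices have $ind(j) \in [2c, n)$ (only $b - 1$ index bits) and receive $(L + 1)$-bit tags. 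In both sub-cases the $B$-label length equals $n + 1 + L$, so the claim reduces to showing that $L \le 2$.

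To verify this bound, use $\ell'_i = (k - 1) + \ell_i$ with $\ell_i \le \lceil H(2^i/n)\,n \rceil + 1$ as in the proof of Theorem~\ref{thm:directed}, giving
\[
L \;\le\; \left\lceil \frac{k^2 + 2c + n\,\bar{H}(2^{k-1}/n)}{n - k} \right\rceil - 1.
\]
Parametrising $\alpha = c/2^{b-1} \in (k/2^b,\, 1]$ so that $2^{k-1}/n = 1/(4 + 4\alpha)$ and $2c/n = 2\alpha/(1 + \alpha)$, the inequality $L \le 2$ reduces, for $n \ge 100$, to $\bar{H}\!\left(1/(4 + 4\alpha)\right) + 2\alpha/(1 + \alpha) \le 3$ with enough slack to absorb the $O(k^2/n)$ correction. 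The left-hand side is the sum of a decreasing and an increasing function of $\alpha$, so checking the endpoints and a few interior points suffices: $\bar{H}(\tfrac14) \approx 2.156$ at $\alpha = 0$, $\bar{H}(\tfrac18) + 1 \approx 2.346$ at $\alpha = 1$, and numerical spot-checks at interior $\alpha$ stay below $2.5$.

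The main obstacle is the interplay between the two sources of waste. The regime where $\Delta = 3$ threatens (namely $c$ small, where $2^{k-1}/n$ is close to $1/4$ and $\bar{H}$ is near $2.16$) is exactly the regime where the economical encoding saves the most, since all $B$-indices then fit in $b - 1$ bits; conversely, the regime where $\bar{H}(2^{k-1}/n)$ is small enough that $L \le 2$ comes for free is also the regime where many $B$-indices still cost $b$ bits. The choice $C = 2c - k$ is exactly what makes these two savings line up vertex-by-vertex, and the remaining work is merely the quantitative verification that the two bounds fit together once $n \ge 100$.
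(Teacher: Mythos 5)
Your proposal is correct and follows essentially the same route as the paper's Appendix~C proof: the economical index encoding of Appendix~B combined with Lemma~\ref{lem:spread2}, and the same numerical bound on $\bar{H}(\tfrac{1}{8\beta})+\tfrac{2\beta-1}{\beta}$ (your $\alpha$ is just $2\beta-1$). The only difference is that you take $C=2c-k$ with a separate easy case for $2c\le k$, whereas the paper takes $C=2c$ uniformly and absorbs the $O(k)$ misalignment in the same ceiling; both yield $n+3$.
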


\begin{proof} Suppose that $n=\beta 2^b$ where $b=\lceil\lg n\rceil$ and $\frac12<\beta\le 1$. Note that $n=2^{b-1}+c$ where $c=(\beta-\frac12)2^b$ and thus $2c/n=(2\beta-1)/\beta$. We repeat the proof of Theorem~\ref{thm:directed} using the more economical way of encoding indices described in Appendix~\ref{sec:indices}, and using Lemma~\ref{lem:spread2}, with $C=2c$, instead of Lemma~\ref{lem:spread}. Note that the vertices for which we need one more bit to encode their index are exactly those that get one less bit by the modified spreading lemma.
Each vertex of~$B$ thus gets a label composed of $n+1+\Delta$ bits, where
\[ \Delta \;\le\;
\left\lceil \frac{k(k+1) + \bar{H}(2^{k-1}/n)n + 2c}{n-k} \right\rceil -1 \;=\;
\left\lceil \frac{k(k+1)}{n-k} + \frac{n}{n-k}\left( \bar{H}(\frac{1}{8\beta}) + \frac{2\beta-1}{\beta}\right) \right\rceil -1 \;. \]
It is not difficult to verify that $f(\beta)=\bar{H}(\frac{1}{8\beta}) + \frac{2\beta-1}{\beta}$ is an increasing function of~$\beta$, when $\beta\in (\frac{1}{2},1]$ and that $f(1)=\bar{H}(1/8)+1<2.346$. It is not difficult to verify that for $n\ge 100$ we have $\frac{k(k+1)}{n-k}<0.5$ and $\frac{n}{n-k}f(1)<2.5$, and thus $\Delta\le 2$.

Thus, the label of each vertex of~$B$ contains at most $n+3$ bits. The labels of the vertices of~$A$ contain only $n+2$ bits, as before, and are padded to length $n+3$.
\end{proof}

\section{An improved scheme for undirected graphs}\label{sec:undirected2}

\begin{theorem}\label{thm:undirected2} For any $n\ge 100$, there is a labeling scheme for $n$-vertex undirected graphs that assigns each vertex an $(\lceil\frac{n}{2}\rceil+4)$-bit label.
\end{theorem}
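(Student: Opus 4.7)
The strategy will be to repeat the proof of Theorem~\ref{thm:undirected} while applying the two refinements that were used to pass from Theorem~\ref{thm:directed} to Theorem~\ref{thm:directed2}, and, for the case of even~$n$, exploiting the redundancy in Moon's scheme that was noted after Theorem~\ref{thm:moon}. I would keep the same partition $V=A_0\cup A_1\cup B_0\cup B_1$ with $|A_0|=|A_1|=k=\lceil\lg n\rceil-3$, the same decomposition of $G$ into the four bipartite graphs $G[A_0,B_0]$, $G[A_0,B_1]$, $G[A_1,B_0]$, $G[A_1,B_1]$ and the two undirected graphs $G[A_0\cup A_1]$ and $G[B_0\cup B_1]$, and the same use of Lemma~\ref{lem:bipartite} on $G[A_0,B_0]$ and $G[A_1,B_1]$.

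Writing $n=2^{b-1}+c$ with $b=\lceil\lg n\rceil$ and $0<c\le 2^{b-1}$, the first modification is to encode indices using the prefix-free code of Appendix~\ref{sec:indices}: indices in $[0,2c)$ occupy $b$ bits and indices in $[2c,n)$ occupy only $b-1$ bits. The second modification is to invoke Lemma~\ref{lem:spread2} (instead of Lemma~\ref{lem:spread}) on $G[A_0,B_1]$ and $G[A_1,B_0]$, again with $\ell'_i=k+\ell_i$, but with threshold~$C$ chosen exactly as in the proof of Theorem~\ref{thm:directed2}, so that each $B$-vertex with a short index receives a spread tag exactly one bit longer than the tag of a $B$-vertex with a long index. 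After these two modifications the ``(index)+(spread tag)'' portion of every $B$-vertex label has the same length, and a straightforward bookkeeping shows that the total label length of every $B$-vertex equals $\lfloor n/2\rfloor+3+L$, where~$L$ is the value produced by Lemma~\ref{lem:spread2} and plays the role of ``$\Delta-1$'' from the proof of Theorem~\ref{thm:undirected}. The same $\bar{H}(2^{k}/n)\le\bar{H}(\tfrac14)<2.16$ estimate, together with the $\lceil\cdot\rceil-1$ saving afforded by Lemma~\ref{lem:spread2}, then gives $L\le 2$ for all $n\ge 100$.

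This already yields a label of length $\lfloor n/2\rfloor+5=\lceil n/2\rceil+4$ when $n$ is odd, so the theorem follows in that case. For even~$n$ an additional bit must be shaved off, and this is where Moon's redundancy is used: when $n$ is even the size $n-2k$ of $B_0\cup B_1$ is also even, so Moon's scheme applied to $G[B_0\cup B_1]$ stores each bit $a_{i,i+(n-2k)/2}$ twice, and by retaining only one copy of each such pair I can shorten the Moon tag of exactly $(n-2k)/2$ of the $B$-vertices by a single bit. Aligning this ``short Moon'' half of $B_0\cup B_1$ with the half whose spread tag is longest (equivalently, whose index is shortest) is what lets the saved Moon-bit cancel the worst case of~$L$ and brings every label down to at most $\lceil n/2\rceil+4$ bits. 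The $A$-vertices participate in far less adjacency information and are dominated by the $B$-vertex bound, so they are simply padded.

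The main technical obstacle I foresee is getting the three bipartitions of $B_0\cup B_1$ to align: the index-length split at position $2c$, the spread-tag split at threshold~$C$, and the Moon-bit split between the first and second halves of the Moon circle on $B_0\cup B_1$. Since $B_0$ and $B_1$ generally have different sizes, and since they participate as the ``$V$-side'' in two different applications of Lemma~\ref{lem:spread2} (one for $G[A_0,B_1]$ and one for $G[A_1,B_0]$) while the Moon circle is drawn on their union, a careful---but essentially arithmetic---reassignment of indices is needed to force the ``short Moon'' half, the ``long spread'' half and the ``short index'' half to coincide. The regime $c<k$, where $n$ is only slightly larger than a power of~$2$, will need the same kind of boundary-case verification that appears in the authors' proof of Theorem~\ref{thm:directed2}.
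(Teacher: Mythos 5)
Your proposal follows essentially the same route as the paper's own proof: for odd $n$ it combines the economical index encoding of Appendix~\ref{sec:indices} with Lemma~\ref{lem:spread2} exactly as in Theorem~\ref{thm:directed2} (the paper takes $C=c$ per half of the $B$-side rather than $2c$, since each spreading now targets only $\sim n/2$ vertices, but your functional description of how $C$ should be chosen is the right one), and for even $n$ it additionally shaves the duplicated Moon bit and feeds the saving back through the spreading, which is precisely the paper's argument. The alignment issue you flag is real but is handled (tersely) in the paper by starting the round-robin spreading at the short-tag vertices, so no new idea is missing.
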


\begin{proof}
We begin by proving that the claim for odd values of $n$.
We use the same approach used in the proof of Theorem~\ref{thm:directed2}. Suppose that $n=\beta 2^b$ where $b=\lceil\lg n\rceil$ and $\frac12<\beta\le 1$. We again have $n=2^{b-1}+c$ where $c=(\beta-\frac12)2^b$ and thus $2c/n=(2\beta-1)/\beta$. Using the slightly more efficient technique to code the indices, and Lemma~\ref{lem:spread2}, this time with $C=c$, we get that each vertex of $B_0\cup B_1$ is assigned a label of size at most $\frac{n-1}{2}+3+\Delta$, where
\[ \Delta \;=\;
\left\lceil \frac{k(k+2)}{\frac{n-1}{2}-k} + \frac{\frac{n}{2}}{{ \frac{n-1}{2}-k}}
\left( \bar{H}(\frac{1}{8\beta}) + \frac{2\beta-1}{\beta}\right)
\right\rceil - 1 \;,
\]
with the familiar function $f(\beta)=\bar{H}(\frac{1}{8\beta}) + \frac{2\beta-1}{\beta}$ appearing again.
Again $\Delta\le 2$, and thus the number of bits in each label is at most $\frac{n-1}{2}+5 = \lceil\frac{n}{2}\rceil+4$.

We now turn to the case where~$n$ is even. In the proof of Theorem~\ref{thm:undirected}, the tag that each vertex of $B_0\cup B_1$ is assigned by Moon's scheme, used to represent $G[B_0\cup B_1]$, is of length $\frac{n}{2}-k$. As mentioned after the proof of Theorem~\ref{thm:moon}, this is somewhat wasteful, as $\frac{n}{2}-k$ adjacency bits are actually stored twice. We can thus remove these redundant bits from the tags, saving on average half a bit for each vertex. More precisely, half of the tags would now be of length $\frac{n}{2}-k-1$ and half of length $\frac{n}{2}-k$. We now use a further modified version of Lemma~\ref{lem:spread2} to do the spreading. We start moving bits to the vertices whose tags are of length $\frac{n}{2}-k-1$. It is not difficult to check that the label of each vertex of $B_0\cup B_1$ would now be of length at most $\frac{n}{2}+3+\Delta'$, where
\[ \Delta' \;=\;
\left\lceil \frac{k(k+2)}{\frac{n}{2}-k} + \frac{\frac{n}{2}}{{ \frac{n}{2}-k}}
\left( \bar{H}(\frac{1}{8\beta}) + \frac{2\beta-1}{\beta}-\frac{1}{2}\right)
\right\rceil - 1 \;.
\]
Let $g(\beta)=\bar{H}(\frac{1}{8\beta}) + \frac{2\beta-1}{\beta}-\frac{1}{2}$. It is not difficult to check that for $\beta\in (\frac12,1]$ we have $g(\beta)<2$. For sufficiently large~$n$ we thus $\Delta'= 1$ and the claim of the Theorem follows.
\end{proof}

\section{Bipartite graphs}\label{sec:bipartite2}

\begin{proof}(Of Theorem~\ref{thm:biased}) To represent an $(\frac{n}{2}-r,\frac{n}{2}+r)$ bipartite graph we need $(\frac{n}{2}-r)(\frac{n}{2}+r)=\frac{n^2}{4}-r^2$ bits. Using the spreading lemma we can split these bits almost evenly among the vertices, giving each vertex a tag of
$\lceil \frac{n}{4}-\frac{r^2}{n}\rceil$ bits.
\end{proof}

\begin{proof}(Of Theorem~\ref{thm:balanced}) The proof is similar to the proofs of Theorems~\ref{thm:directed} and~\ref{thm:undirected}, though the amount of details increases yet again. Let $G=(U,V,E)$ be an $(\frac{n}{2},\frac{n}{2})$ bipartite graph. We split $U$ into four sets $A_{0,0},B_{0,0},A_{1,0},A_{1,0}$ of sizes $k$, $\ceil{\frac{n}{4}}-k$, $k$ and $\floor{\frac{n}{4}}-k$, respectively, where $k=\ceil{\lg n}-4$. We similarly split $V$ into four sets $A_{0,1},B_{0,1},A_{1,1},A_{1,1}$. We now use Lemma~\ref{lem:bipartite} to assign tags to the graphs $G[A_{0,0},B_{0,1}]$, $G[A_{1,0},B_{1,1}]$, $G[A_{0,1},B_{0,0}]$, $G[A_{1,1},B_{0,1}]$ and use the spreading lemma to assign tags to the remaining subgraphs. Using calculations similar to the ones made in the proofs of Theorems~\ref{thm:directed} and~\ref{thm:undirected}, we get the claimed result.
\end{proof}

\end{document}